\newtheorem{proposition}{Proposition}
\newtheorem{definition}{Definition}
\newtheorem{example}{Example}
\newtheorem{remark}{Remark}
\newtheorem{problem}{Problem}
\newtheorem{assumption}{Assumption}
\title{Multi-Robot Data Gathering Under Buffer \\ Constraints  and Intermittent Communication}
\author{
Meng Guo~\IEEEmembership{Student Member,~IEEE} and Michael M. Zavlanos~~\IEEEmembership{Member,~IEEE} %
\thanks{The authors are with the Department of Mechanical Engineering and Materials Science, Duke University, Durham, NC 27708 USA. Emails: {\tt\small meng.guo, michael.zavlanos@duke.edu}. This work is supported in part by the NSF awards CNS \#1261828 and CNS \#1302284.
}
}
\begin{document}
\maketitle \thispagestyle{empty} \pagestyle{empty}
%=============================
%=============================
\begin{abstract}
 We consider a team of heterogeneous robots which are deployed within a common workspace to gather different types of data. The robots have different roles due to different capabilities: some gather data from the workspace (source robots) and others receive data from source robots and upload them to a data center (relay robots).
The data-gathering tasks are specified locally to each source robot as high-level Linear Temporal Logic (LTL) formulas, that capture the different types of data that need to be gathered at different regions of interest.
All robots have a limited buffer to store the data.
 Thus the data gathered by source robots should be transferred to relay robots before their buffers overflow, respecting at the same time limited communication range for all robots.
  The main contribution of this work is a distributed motion coordination and intermittent communication scheme that guarantees the satisfaction of all local tasks, while obeying the above constraints.
  The robot motion and inter-robot communication  are closely coupled and coordinated during run time by scheduling intermittent meeting events to facilitate the local plan execution.
We present both numerical simulations  and experimental studies to demonstrate the advantages of the proposed method over existing approaches that predominantly require all-time network connectivity.  
\end{abstract}
%==============
%==============
\begin{IEEEkeywords}
Networked Robots, Linear Temporal Logic, Motion and Task Planning, Intermittent Communication. 
\end{IEEEkeywords}
%=============================
%=============================
\section{Introduction}\label{sec:introduction}
\PARstart{M}{any} applications involve robots that are deployed in a workspace to gather different types of data and upload them to a data center for processing.
For instance, teams of unmanned ground vehicles (UGV) can monitor the temperature, humidity, and stand density in large forests or teams of unmanned aerial vehicles (UAV) can monitor the behavior of animal flocks and growth of the crops in farmlands~\cite{dunbabin2012robots}.
{Due to heterogeneous sensing and motion capabilities, the robots in these applications can gather different types of data in different regions within the workspace.} 
Thus the robots can  be assigned local data-gathering tasks that vary across the team~\cite{dunbabin2012robots}.
In this work, we employ Linear Temporal Logic (LTL) as the formal language to describe complex high-level tasks beyond the classic point-to-point navigation.
A LTL task formula is usually specified with respect to an abstraction of the robot motion~\cite{bhatia2010sampling, ulusoy2013optimality}. 
Then a high-level discrete plan is found using off-the-shelf model-checking algorithms~\cite{baier2008principles}, and is  executed through low-level continuous controllers~\cite{fainekos2009temporal}. This framework can be extended to allow for both robot motion and actions in the task specification~\cite{guo2017task}.

The above framework has also been applied to multi-robot systems either in a~\emph{top-down} approach where a global LTL task formula is assigned to the whole team of robots~\cite{chen2012formal, fainekos2006translating, kloetzer2011multi, ulusoy2013optimality}, or in a~\emph{bottom-up} manner where an individual LTL task formula is assigned locally to each robot~\cite{guo2015multi, tumova2014receding}.
Here, we favor the latter formalism as it provides a more natural framework to model independent temporal tasks within large teams of robots that have heterogeneous capabilities.
Specifically, we consider two types of robots: source robots that are assigned local tasks to gather different types of data in different regions in the workspace, and relay robots that receive data from source robots and upload them directly to a data center.
All robots have a limited buffer to store the data. Thus the data gathered by source robots should be transferred to relay robots before the buffers overflow.
Moreover, all robots have a limited communication range, so that they can only communicate when they are sufficiently close to each other.

Communication in the field of mobile robotics has typically relied on constructs from graph theory, with line-of-sight models~\cite{arkin2002line, esposito2006maintaining} and proximity graphs~\cite{ji2007distributed, zavlanos2007flocking,Derb11decentralized, schuresko2009distributed,zavlanos2008distributed, guo2014controlling} gaining the most popularity. In most of these problems, the property of interest is connectivity of the communication network as this allows reliable delivery of information between any pair of robots. Approaches that ensure connectivity for all time either maintain all initial communication links between the robots provided that the initial communication network is connected~\cite{ji2007distributed, guo2014controlling,zavlanos2007potential,guo2015communication},
or allow for addition and removal of communication links while ensuring that the connectivity requirement is not violated~\cite{guo2016hybrid, zavlanos2007flocking, Derb11decentralized, schuresko2009distributed, zavlanos2008distributed, zavlanos2011graph}.
Realistic communication models have recently been proposed in \cite{yan2012robotic,le2012adaptive,zavlanos2013network} that take into account path loss, shadowing, and multi-path fading.
The above approaches enforce all-time connectivity thus are rather restrictive. 
Intermittent communication frameworks, on the other hand, allow the robots to occasionally disconnect from the team and accomplish their tasks free of communication constraints. 
Intermittent communication in multi-agent systems has been studied in consensus problems~\cite{wen2014distributed}, coverage problems~\cite{wang2010awareness}, and in delay-tolerant networks~\cite{daly2007social,jones2007practical}. The common assumption in these works is that the communication network is connected infinitely often. In our recent work~\cite{kantaros2016distributed,kantaros16distributed,kantaros2016simultaneous}, we proposed an intermittent connectivity control strategy that ensures the whole team is connected infinitely often for coverage and path optimization problems. 
However, local high-level temporal tasks are not considered there nor is a model of inter-robot data transfer.

The constraint of {limited buffer size} is of practical importance especially for time-critical data-gathering applications and {for local temporal tasks that require an \emph{infinite} sequence of data-gathering actions.}
The work in~\cite{smith2011optimal} considers a single robot transferring data between locations. The proposed approach minimizes the time interval between two consecutive data-uploading time instants. But it does not explicitly model the evolution of the robot's buffer or the inter-robot communication. 
Similar buffer constraints are considered in~\cite{yamauchi1998frontier} for  multi-robot frontier-based exploration. However locally-assigned data-gathering tasks described by LTL formulas are not considered there, nor are communication constraints.  
Another related area is temporal logic task planning under resource constraints. 
The work in~\cite{leahy2016provably} considers a global surveillance task performed by multiple aerial vehicles subject to battery charging constraints.
The multi-vehicle routing problem considered in~\cite{karaman2011linear} proposes a solution based on Mixed-Integer Linear Programming (MILP), which can potentially be extended to include resource constraints.

The main contribution of this work lies in the development of an online distributed framework that jointly controls local data-gathering tasks and data transfer communication events, so that the buffers at every robot never overflow.
The proposed framework guarantees the satisfaction of all local tasks specified as LTL formulas, without imposing all-time connectivity on the communication network. 
The efficiency of the proposed framework compared to a centralized approach and two static approaches is demonstrated via numerical simulations and experimental studies. To the best of our knowledge, this is the first distributed data-gathering framework under intermittent communication that is also online. 
This work is built on preliminary results presented in~\cite{guo2017distributed}. Compared to~\cite{guo2017distributed}, the real-time control and coordination algorithm presented here is more efficient as it allows the relay robots to swap meeting events in order to faster service the source robots, while it also accounts for robot failures, dynamic robot membership, and fixed data centers.
Furthermore, more extensive numerical simulations are presented, as well as experimental results showing the capabilities of our method.

The rest of the paper is organized as follows: Section~\ref{sec:ltl}  introduces some preliminaries on LTL and B\"uchi {Automata}. Section~\ref{sec:formulation} formulates the problem. Section~\ref{sec:solution} discusses the proposed dynamic approach to joint data-gathering and intermittent communication control. Numerical simulations and experiment studies are shown in Sections~\ref{sec:simulate} and~\ref{sec:exp}, respectively. We conclude in Section~\ref{sec:future}.

%%%%%%%%%%%%%%%%%%%%%%%%%%%%%%%%%%%%%%%%
%%%%%%%%%%%%%%%%%%%%%%%%%%%%%%%%%%%%%%%%
\section{Preliminaries on LTL}\label{sec:ltl}

 {Atomic propositions} are Boolean variables that can be either true or false.  The  ingredients of an LTL formula are a set of atomic propositions $AP$ and several boolean and temporal operators, with the following syntax~\cite{baier2008principles}:
$$
\varphi::=\top \;|\; p  \;|\; \varphi_1 \wedge \varphi_2  \;|\; \neg \varphi  \;|\; \bigcirc \varphi  \;|\;  \varphi_1\, \textsf{U}\, \varphi_2,
$$
where $\top\triangleq \texttt{True}$, $p \in AP$ and $\bigcirc$ (\emph{next}), $\textsf{U}$ (\emph{until}), $\bot\triangleq \neg \top$. For brevity, we omit the derivations of other useful operators like $\square$ (\emph{always}), $\Diamond$ (\emph{eventually}), $\Rightarrow$ (\emph{implication}).
The semantics of LTL is defined over the infinite words over~$2^{AP}$. Intuitively, $\sigma \in AP$ is satisfied on a word $w = w(1)w(2)w(3)\ldots\in (2^{AP})^{\omega}$ if it holds at $w(1)$, i.e., if $\sigma \in w(1)$. Formula $\bigcirc \, \varphi$ holds true if $\varphi$ is satisfied on the word suffix that begins in the next position $w(2)$, whereas $\varphi_1 \, \textsf{U}\, \varphi_2$ states that $\varphi_1$ has to remain true until $\varphi_2$ becomes true. Finally, $\Diamond \, \varphi$ and $\square \, \varphi$ are true if $\varphi$ holds on $w$ eventually and always, respectively. We refer the readers to Chapter~5 of~\cite{baier2008principles} for the full definition of LTL syntax and semantics.

The language of words that satisfy an LTL formula~$\varphi$ over $AP$ can be fully captured through~\cite{baier2008principles} a Nondeterministic B\"uchi automaton (NBA) $\mathcal{A}_{\varphi} $, defined as $\mathcal{A}_{\varphi}=(Q, \,2^{AP},\, \delta,\, Q_0,\, {F})$,
where $Q$ is a  set of states; $2^{AP}$ is the set of all allowed alphabets; $\delta\subseteq Q\times 2^{AP} \times {Q}$ is a transition relation; $Q_0, \, {F}\subseteq Q$ are the set of initial and accepting states, respectively.
The process of constructing~$\mathcal{A}_{\varphi}$ can be done in time and space $2^{\mathcal{O}(|\varphi|)}$, where $|\varphi|$ is the length of $\varphi$~\cite{baier2008principles}. There are fast translation tools~\cite{gastin2001fast, package} to obtain~$\mathcal{A}_{\varphi}$  given~$\varphi$.

%==============================

%==============================
\section{Problem Formulation}\label{sec:formulation}

\subsection{Robot Model}\label{sec:model}
Consider a team of~$N$ dynamical robots where each robot~$i\in \mathcal{N}\triangleq \{1,2,\cdots,N\}$ satisfies the unicycle dynamics:
\begin{equation}\label{eq:unicycle}
    \dot{x}_i = v_i\cos(\theta_i), \quad \dot{y}_i = v_i\sin(\theta_i), \quad \dot{\theta}_i = \omega_i, 
\end{equation}  
where~$p_i(t)=(x_i(t),\,y_i(t))\in \mathbb{R}^2$, {$\theta_i(t)\in (-\pi,\pi]$} are robot~$i$'s position and orientation at time~$t>0$. 
The control inputs are given by~$u_i(t)=(v_i(t),\,\omega_i(t))$ as the linear and angular velocity.
Each robot has a reference linear and angular velocities denoted by~$v_i^{\texttt{ref}}$ and~$\omega_i^{\texttt{ref}}$, {which are used later to estimate the traveling time. }
The workspace is a bounded 2D area~$\mathcal{W}\subset \mathbb{R}^2$, within which there are clusters of obstacles~$\mathcal{O}\subset \mathcal{W}$. 
The free space is denoted by~$\mathcal{F}=\mathcal{W} \backslash \mathcal{O}$. Note that all robots are assumed to be point masses and robot collision is not considered here. 

As mentioned in Section~\ref{sec:introduction}, the robots are categorized into two subgroups, denoted by~$\mathcal{N}^l,\,\mathcal{N}^f\subset \mathcal{N}$ so that~$\mathcal{N}^l\cup \mathcal{N}^f=\mathcal{N}$ and {$\mathcal{N}^l\cap \mathcal{N}^f=\emptyset$}.
Every robot~$i\in \mathcal{N}^f$ is equipped with short-range wireless units and  can {only} send and receive data from other robots $j$ such that~$\|p_i(t)-p_j(t)\|\leq r_i$, where~$r_i>0$ is the communication range, $\forall j\in \mathcal{N}$.
On the other hand, robots in~$\mathcal{N}^l$ are equipped with long-range wireless units and have the extra function to {upload} their stored data to a remote data center.
In other words,  robots in~$\mathcal{N}^f$ are responsible for gathering data about the workspace while robots in~$\mathcal{N}^l$ are in charge of uploading these data to the data center. 
In the sequel, we simply refer to robots in~$\mathcal{N}^f$ as \emph{source robots} and robots in~$\mathcal{N}^l$ as \emph{relay robots}. Note that there is {at least one} source and relay robot, i.e., it holds that~$|\mathcal{N}^f|, |\mathcal{N}^l|\geq 1$.

\begin{remark}\label{remark:immediate}
The fact that the relay robots can upload their stored data immediately to the data center is due to their long-range communication capabilities. This assumption can be  relaxed by choosing several fixed data centers within the workspace {that the relay robots need to visit and upload their data}. More details are provided in Section~\ref{sec:fixed}. \hfill $\blacksquare$
\end{remark}

\subsection{Data-gathering Tasks}\label{sec:task}
Each source robot~$i\in\mathcal{N}^f$ has a {local} data-gathering task associated with different regions in the freespace. Denote by~$\Pi_{i}=\{\pi_{i,1},\,\pi_{i,2},\cdots, \pi_{i,M_i}\}$ the collection of these regions, where~$\pi_{i,\ell}\subset \mathcal{F}$, $\forall \ell=1,2,\cdots,M_i$ and~$M_i>0$. They contain information of interest.
Moreover, there is a set of data-gathering actions that robot~$i$ can perform at these regions, denoted by~$G_i=\{g_{i,0}, g_{i,1},\,g_{i,2},\cdots,g_{i,K_i}\}$, {where~$g_{i,k}$ means that ``type-$k$ data is gathered by robot~$i$''}, $\forall k=1,2,\cdots,K_i$ and~$K_i\geq 1$. By default,~$g_{i,0}$ means doing nothing. 
The time needed to perform each action by robot~$i\in \mathcal{N}^f$ is given by function~$Z_i:G_i\rightarrow \mathbb{R}^+$.

With a slight abuse of notation, we denote the set of robot~$i$'s atomic propositions by~$AP_i=\{\pi_{i,\ell}\wedge g_{i,k},\, \forall \pi_{i,\ell}\in \Pi_i, \forall g_{i,k}\in G_i\}$, where each proposition~$\pi_{i,\ell}\wedge g_{i,k}$ stands for ``robot~$i$ gathers type-$k$ data at region~$\pi_{i,\ell}$''.
Over these atomic propositions, we can specify a high-level data-gathering task, denoted by~$\varphi_i$, following the LTL semantics in~Section~\ref{sec:ltl}. 
Simply speaking,~$\varphi_i$ specifies the desired sequence of data-gathering actions to be performed at certain regions of interest within the workspace.
Note that LTL formulas allow us to specify data-gathering tasks of finite or \emph{infinite} executions. For instance, $\varphi_i=\Diamond((\pi_{i,1}\wedge g_{i,2}) \wedge  \Diamond (\pi_{i,3}\wedge g_{i,4}))$ means that {``robot~$i$ should gather type-2 data at region~1, then type-4 data at region~3''}, or $\varphi_i = \square \Diamond (\pi_{i,6} \wedge g_{i,7}) \wedge \square \Diamond (\pi_{i,7} \wedge g_{i,2})$ means that ``robot~$i$ should \emph{infinitely often} gather type-7 data at region~6 and type-2 data at region~7''.

\begin{remark}\label{remark:relay-robot}
It is worth mentioning that relay robots~$j\in \mathcal{N}^l$ do not have local tasks as their goal is to communicate with source robots and upload data to the data center. This assumption can be relaxed and is part of our future work. \hfill $\blacksquare$
\end{remark}

\subsection{Buffer Size and Communication Constraints}\label{sec:buffer}
Each robot~$i\in\mathcal{N}$ has a \emph{limited buffer} to store data. To simplify the formulation, we quantify the data size into \emph{units}, i.e., robot~$i$ has a buffer to store a maximum number {of}~$\overline{B}_i>0$ units of data, $\forall i\in \mathcal{N}$.
Furthermore, denote by~$b_i(t)\in \mathbb{N}_{\geq 0}$ the number of data units stored in the buffer of any robot~$i\in\mathcal{N}$ at time~$t\geq 0$. Note that~$b_i(0)=0$, $\forall i\in \mathcal{N}$. It {must} hold that~$b_i(t)\leq \overline{B}_i$, $\forall t\geq 0$ such that the buffer of robot~$i$ does not overflow. Whenever robot~$i\in \mathcal{N}^f$ performs a data-gathering action~$g_{i,k}\in G_i$ at time~$t$, $b_i(t)$ changes as follows:
\begin{equation}\label{eq:gather}
b_i(t^+) = b_i(t^-)+ D_i(g_{i,k}),
\end{equation}
where~$D_i:G_i\rightarrow \mathbb{Z}^+$ is the number of data units gathered by performing action~$g_{i,k}\in G_i$; $b_i(t^-)$ and~$b_i(t^+)$ are the number of data units at robot~$i$'s buffer \emph{before} and~\emph{after} the action~$g_{i,k}$ is performed at time~$t\geq 0$.
If~$b_i(t^+)>\overline{B}_i$, then this action~$g_{i,k}$ can \emph{not} be performed as it {will} lead to buffer overflow.
We assume that~$D_i(g_{i,k})\leq \overline{B}_i$, $\forall g_{i,k}\in G_i$, meaning that any action can be performed when the  buffer is zero.

Moreover, any two robots can send and receive data when they are within each other's communication range. In particular, denote by~$\texttt{c}_{ij}: \mathbb{R}\rightarrow \mathbb{Z}^+$ {the data transfer map} from robot~$i$ to robot~$j$ at time~$t>0$. When robot~$i$ transfers~$\texttt{c}_{ij}(t)$ units of data to robot~$j$, their stored data units change by:
\begin{equation}\label{eq:transfer}
    b_i(t^+) = b_i(t^-) - \texttt{c}_{ij}(t) \;\text{and}\; b_j(t^+) = b_j(t^-) + \texttt{c}_{ij}(t),
\end{equation}
where~$b_i(t^+)$ and~$b_i(t^-)$ (or $b_j(t^+)$ and~$b_j(t^-)$) are the stored data units of robot~$i$ (or robot~$j$) before and after the data transfer.
To allow this transfer, two conditions \emph{must} hold: (i)~$c_{ij}(t)\leq b_i(t^-)$ so that robot~$i$ has enough data to transfer; and~(ii) $b_j(t^+)\leq \overline{B}_j$ so that robot~$j$'s buffer does not overflow.

At last, as mentioned earlier, any relay robot~$j\in \mathcal{N}^l$ has an extra function to upload its stored data to the remote data center. Denote by~$\texttt{d}_{j}: \mathbb{R}\rightarrow \mathbb{Z}^+$ the upload function of robot~$j$ at time~$t>0$. When robot~$j$ uploads~$\texttt{d}_{j}(t)$ units of data to the data center, its stored data changes as follows:
\begin{equation}\label{eq:upload}
b_j(t^+)= b_j(t^-)-\texttt{d}_{j}(t),
\end{equation}
where~$b_j(t^+)$ and~$b_j(t^-)$ are defined similarly as before.
Clearly, the uploaded data \emph{must} not be more than the stored data, i.e.,~$\texttt{d}_j(t)\leq b_j(t^-)$ and~$b_j(t^+)\geq 0$.

\subsection{Problem Statement}
{Consider a team of $N$ robots, consisting of $N^f$ source robots and $N^l$ relay robots, that all satisfy the dynamics~\eqref{eq:unicycle}. Each robot~$i\in \mathcal{N}$ has a limited communication rage $r_i$ and a maximum buffer size~$\overline{B}_i$. The robots' onboard buffers change according to~\eqref{eq:gather}-\eqref{eq:upload}.
Furthermore, each source robot~$i\in \mathcal{N}^f$ is assigned a data-gathering task captured by an LTL formula $\varphi_i$ over~$AP_i$.
The problem we address in this paper is (i) the design of motion controllers $u_i$ and action events $D_i$ that satisfy the local tasks~$\varphi_i$, $\forall i\in \mathcal{N}^f$; as well as (ii) the design of sequences of communication events~$\texttt{c}_{ij}$ and $\texttt{d}_j$ that ensure data delivery to the data center without buffer overflow, $\forall i \in \mathcal{N}^f$ and $\forall j\in \mathcal{N}^l$.
Moreover, we seek a solution that is distributed and online, meaning that there is no central coordinator that collects all information and determines the robots' motion and actions.
}

%Regrading the motivation, HD video for example can fill up the buffer of small UAVs. But the same framework can be used for transportation, where the service robots collect packages and the relay robots transport these packages to a center.

{
Note that, even though data storage is nowadays very cheap and for many practical purposes can be considered unlimited, setting a buffer limit has the advantage that it forces the robots to relay the gathered data to the data center more frequently, before this limit is reached. Thus, buffer constraints can be used to model \emph{urgency} for communication and they are important in case of time critical tasks. 
Such tasks can range from multi-robot surveillance where the urgency to collect information to a data center is related to quicker response times to possible situations, to cooperative transportation where buffer limits can be used to model the loads that the robots can carry and transport to each other. Note that imposing time constraints (compared to buffer constraints that indirectly model urgency to deliver data) would change completely the problem formulation addressed in this paper and is part of our future work. 
}

\begin{remark}\label{remark:not-top-down}
Note that different from ``\emph{top-down}'' approaches~\cite{chen2012formal, kloetzer2011multi}, here the data-gathering tasks are assigned locally to each source robot, not to the whole team. Each source robot does \emph{not} need to know the number of the other source robots or their local tasks. \hfill $\blacksquare$
\end{remark}

%------------------------------
\begin{figure}[t]
\centering
\includegraphics[width =0.49\textwidth]{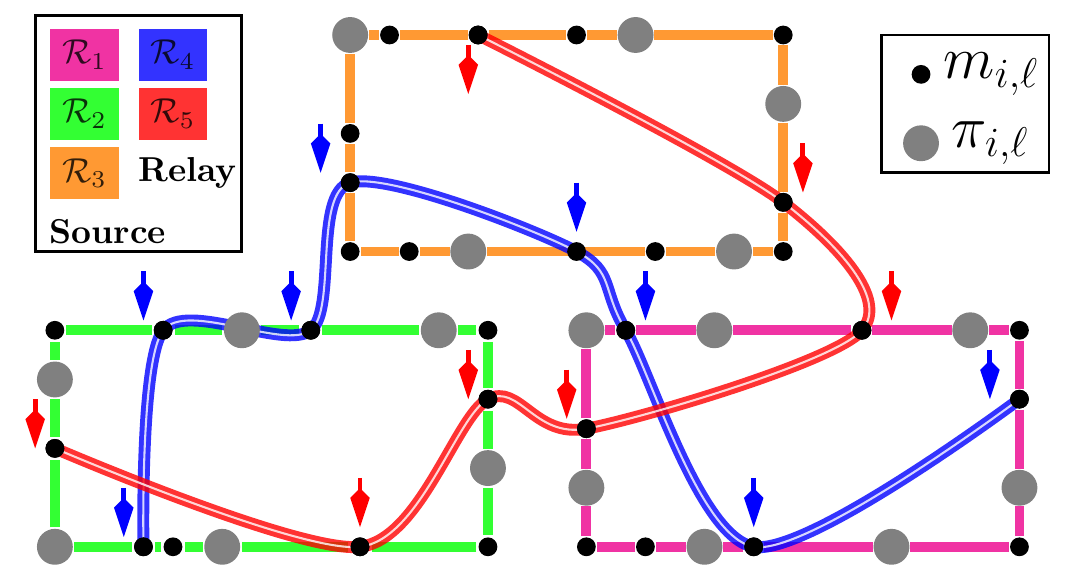}
\caption{Illustration of the proposed solution in Section~\ref{sec:solution}.  Each source robot (in magenta, green, orange) synthesizes its own discrete plan, which includes the regions of interest (in grey-filled circles), the waypoints in between (in black-filled circles) and actions to perform different regions. They coordinate with relay robots (in blue and red) to meet, transfer and upload the gathered data (indicated by blue and red arrows), before their buffers overflow. Note that each source robot can coordinate with multiple relay robots, and vice versa. 
}
\label{fig:idea}
\end{figure}
%------------------------------

%==============================
\section{Centralized Optimal Solution}\label{sec:centralized}
{In this section, we present a centralized solution to the considered problem, which is also the optimal solution. %in terms of total traveling time.

The centralized solution consists of three major steps: (i) the construction of a composed transition system for the whole team, which encapsulates all robots' motion and actions (including data gathering, data upload and data exchange). Particularly, the composed FTS is defined as 
\begin{equation}\label{eq:composed}
\mathcal{T}_{\texttt{a}} \triangleq (S_{\texttt{a}},\, \rightarrow_{\texttt{a}},\, S_{\texttt{a},0},\, T_{\texttt{a}},\, AP,\, L_{\texttt{a}}),
\end{equation}
where $S_{\texttt{a}}=(\Pi_1\times \mathcal{B}_1)\times (\Pi_2\times \mathcal{B}_2)\times\cdots (\Pi_N\times \mathcal{B}_N)$ is the set of composed states and $\mathcal{B}_i=\{0,1,\cdots,\overline{B}_i\}$. 
Namely, state $s\in S_{\texttt{a}}$ indicates the position and buffer size of each robot.
The transition relation~$\rightarrow_{\texttt{a}}\subset S_{\texttt{a}}\times S_{\texttt{a}}$ is defined by~$(s,\,s')\in \rightarrow_{\texttt{a}}$ where $s=(\pi_1,b_1)\times\cdots (\pi_N,b_N)$ and $s'=(\pi'_1,b'_1)\times\cdots (\pi'_N,b'_N)$ if robot $i$ is allowed to transition from $\pi_i$ to $\pi_i'$, and if the change in buffer size from $b_i$ to $b_i'$ satisfies both the communication-range constraints and the buffer dynamics defined in~\eqref{eq:gather}-\eqref{eq:upload},  $\forall i\in \mathcal{N}$. The initial state $S_{\texttt{a},0}\in S_{\texttt{a}}$ is given by the initial position and buffer size of the robots. The transition cost $T_{\texttt{a}}:\rightarrow_{\texttt{a}}\rightarrow \mathbb{R}^+$ measures the time that each transition takes. $AP=\cup_{i\in \mathcal{N}^f}AP_i$ is the set of propositions. Lastly, the labeling function $L_{\texttt{a}}:S_{\texttt{a}} \rightarrow 2^{AP}$ reflects the data-gathering actions that have been performed by the source robots at the regions of interest. 
(ii) The conjunction of all source robots' local tasks is defined as $\varphi_{\texttt{a}}=\bigwedge_{i\in \mathcal{N}^f} \varphi_i$, and the corresponding NBA is derived as $\mathcal{A}_{\varphi_{\texttt{a}}}$, as described in Section~\ref{sec:ltl}. (iii) Standard model checking algorithms~\cite{baier2008principles} can be used to search for a lasso-shaped path of~$\mathcal{T}_{\texttt{a}}$ that satisfies~$\varphi_{\texttt{a}}$. These involve constructing the product automaton $\mathcal{P}_{\texttt{a}}$ between $\mathcal{T}_{\texttt{a}}$ and the NBA $\mathcal{A}_{\varphi_{\texttt{a}}}$.
To optimize the total plan cost both in the plan prefix and plan suffix, defined as the accumulated travel time, the synthesis algorithm from our earlier work~\cite{guo2015multi} can be used.

Note that the above solution has two serious drawbacks: first, it is computationally intractable for systems with large numbers of robots and complex tasks, due to the {combinatorial} size of composed system and the double-exponential complexity of the model-checking process~\cite{baier2008principles}.  Second, the derived plan needs to be executed in a \emph{fully-synchronized} way, meaning that the next transition can be taken only if all robots have completed their current transition. Not only does this introduce heavy communication overhead for synchronization but also this all-time synchronization  may be infeasible due to limited communication range considered here. More numerical analyses can be found in Section~\ref{sec:simulate}.

}

%==============================
\section{Dynamic Data-Gathering and Intermittent Communication Control}\label{sec:solution}
The proposed solution, as shown in Figure~\ref{fig:idea}, consists of three main parts: (i) the workspace abstraction and the synthesis of local discrete plans; (ii) the coordination of meeting events between source and relay robots, including the initial coordination and the real-time coordination; and (iii) the execution of local discrete plans and the data transfer protocol.

\subsection{Local Discrete Plan Synthesis}\label{sec:initial-plan}
Initially at time $t=0$, each source robot~$i\in \mathcal{N}^f$ synthesizes its local discrete plan to satisfy its local task~$\varphi_i$. This plan is given as an infinite sequence of regions to visit and the data-gathering actions to perform at each region.

\subsubsection{Road Map Construction}\label{sec:roadmap}
%% % ==============================   
%% \begin{figure}[t]
%%   \centering
%%   \includegraphics[height=7.5cm]{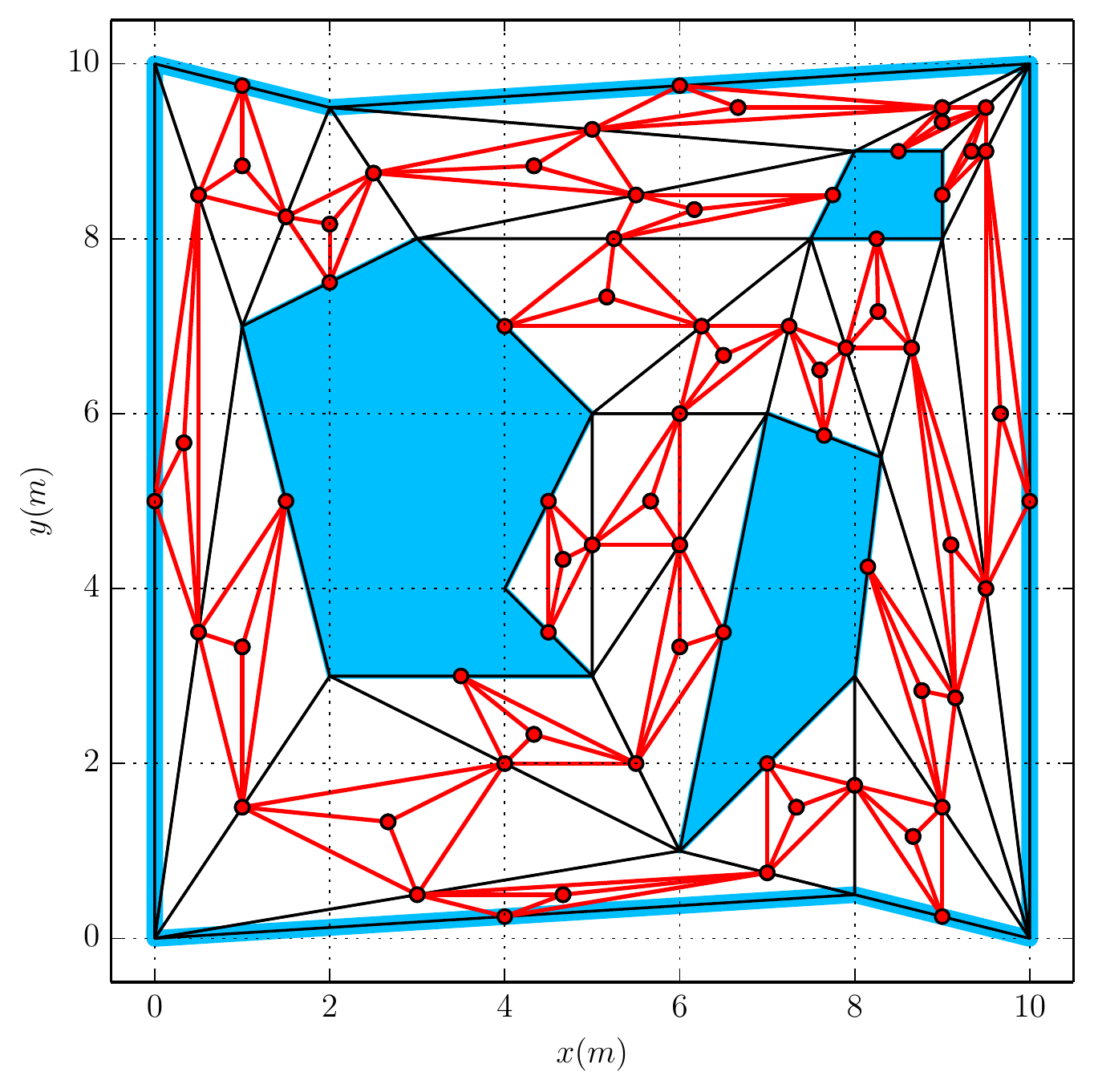}
%%   \caption{  }
%%   \label{fig:triangulation}
%% \end{figure}
%% % ==============================

First, an abstraction of the freespace~$\mathcal{F}$ is constructed as a \emph{roadmap} on which all robots in~$\mathcal{N}$ can move.
\begin{definition}\label{def:road}
The roadmap over the {freespace}~$\mathcal{F}$ is a weighted and undirected graph~$\mathbf{M}=(M,\, H,\, W)$, where~$M$ is the set of waypoints~$m\in \mathbb{R}^2$, $\forall m\in M$, $H\subseteq M\times M$ indicates whether two waypoints are connected, and~$W:H\rightarrow \mathbb{R}^+$ is the Euclidean distance between two waypoints. \hfill $\blacksquare$
\end{definition}

To construct the roadmap $\mathbf{M}$, in this work, we rely on the triangulation algorithm for polygons with holes, see Chapter~6 in~\cite{lavalle2006planning} and the package~``\texttt{poly2tri}'' in~\cite{poly2tri}.
{We omit the algorithmic details due to limited space and refer the interested readers to~\cite{guo2017distributed} and~\cite{Kloetzer15system,de2000computational} for different algorithms.}
An example is shown in Figure~\ref{fig:triangulation}.
This roadmap allows the robots to move among the waypoints without crossing the obstacles.

%Note that the road map is identical for all robots as they have the same dynamics and coexist within the same workspace. Some waypoints lie within the robots' regions of interest while some do not. 

Using the roadmap~$\mathbf{M}$, we can construct a finite transition system (FTS) to abstract the motion of each source robot~$i\in \mathcal{N}^f$ among its regions of interest within the freespace. 
Denote this motion model by $\mathcal{T}_i = (\Pi_i,\, \rightarrow_i,\, \Pi_{i,0},\, T_i)$,
where~$\Pi_i$ is the set of regions of interest, 
$\rightarrow_i \subseteq \Pi_i \times \Pi_i$ denotes the transition relation,
$\Pi_{i,0}\in \Pi_i$ is the region robot~$i$ starts from initially,
%$W_i: ``\rightarrow_i" \to \mathbb{R}^+$ computes the distance of each transition;
$T_i: \rightarrow_i \to \mathbb{R}^+$ approximates the time each transition takes. Particularly, consider two regions of interest of robot~$i$ denoted by~$\pi_{i,s},\,\pi_{i,f}\in \Pi_i$. Denote by~$m_{i,s},\,m_{i,f}\in M$ the closest waypoints to the center points of~$\pi_{i,s}$ and~$\pi_{i,f}$, respectively.
Then, $(\pi_{i,s},\pi_{i,f})\in \rightarrow_i$ if there exists a path in~$\mathbf{M}$ starting from~$m_{i,s}$ to~$m_{i,f}$ \emph{without} crossing any other waypoint~$m_{i,\ell}\in M$ that belongs to any other region~$\pi_{i,\ell}\in \Pi_i$ with~$\ell \neq s, f$.
Denote the shortest of those paths by~$\Gamma_{i,sf}=m_{i,s}m_{i,s+1}\cdots m_{i,f}$, which can be obtained from a graph search over~$\mathbf{M}$ between~$m_{i,s}$ and~$m_{i,f}$.
Furthermore, 
%the time for robot~$i$ to traverse~$\Gamma_{i,sf}$ can be approximated  given its reference linear and angular velocities introduced in Section~\ref{sec:formulation}.
for each transition~$(\pi_{i,s},\pi_{i,f})\in \rightarrow_i$, the time for robot~$i$ to traverse the associated path~$\Gamma_{i,sf}$ is computed by
\begin{equation}\label{eq:approx-time}
\begin{split}
\textstyle
&T_i(\pi_{i,s},\,\pi_{i,f})=\big{(}\sum\nolimits_{k=s}^{f-1} \|m_{i,k},\,m_{i,k+1}\|\big{)}/v_i^{\text{ref}}\\
&+\big{(}\sum\nolimits_{k=s}^{f-2}\Theta(m_{i,s+1}- m_{i,s},m_{i,s+2}- m_{i,s+1})\big{)}/\omega_i^{\text{ref}},
\end{split}
\end{equation}
where $v_i^{\text{ref}}$, $\omega_i^{\text{ref}}$ are the reference linear and angular velocities as defined in Section~\ref{sec:formulation}, and the function $\Theta:\mathbb{R}^2\times \mathbb{R}^2\to (-\pi,\,\pi]$ computes the angle between two 2D vectors. Note that~$T_i(\cdot)$ is only an estimate of the time it takes for robot~$i$ to travel along each edge.

{Given the motion abstraction~$\mathcal{T}_i$ and the data-gathering actions in~$G_i$, the \emph{complete} robot model can be constructed as shown below; more details can be found in~\cite{guo2017task}.}
\begin{definition}\label{def:robot}
  The complete robot model is the FTS~$\mathcal{R}_i=(\Pi_{i,\mathcal{R}},\, \rightarrow_{i,\mathcal{R}},\, AP_i,\, L_i,\, \Pi_{i,0,\mathcal{R}}, \, T_{i,\mathcal{R}})$, where~$\Pi_{i,\mathcal{R}}=\Pi_i\times G_i$ is the full state; $\rightarrow_{i,\mathcal{R}}\subseteq \Pi_{i,\mathcal{R}}\times \Pi_{i,\mathcal{R}}$ is the transition relation such that~$(\langle \pi_{i,s},g_{i,\ell}\rangle,\langle\pi_{i,f},g_{i,k}\rangle)\in \rightarrow_{i,\mathcal{R}}$ if~(i)~$\langle \pi_{i,s},\pi_{i,f}\rangle\in \rightarrow_i$ and~$g_{i,k}= g_{i,0}$, or (ii)~$\pi_{i,s}=\pi_{i,f}$ and~$g_{i,\ell},g_{i,k}\in G_i$;
  $AP_i$ are the atomic propositions from Section~\ref{sec:task};
the labeling function is defined as~$L_i(\langle \pi_{i,s},g_{i,\ell}\rangle)=\{\pi_{i,s},\, g_{i,\ell}\}$, $\forall \langle \pi_{i,s},\, g_{i,\ell}\rangle \in \Pi_{i,\mathcal{R}}$;
  $\Pi_{i,0,\mathcal{R}}=\langle \Pi_{i,0},\, g_{i,0}\rangle $ is the initial state; and~$T_{i,\mathcal{R}}(\langle\pi_{i,s},g_{i,\ell}\rangle,\langle\pi_{i,f},g_{i,k}\rangle)=T_i(\pi_{i,s},\pi_{i,f}) + Z_i(g_{i,k})$, $\forall (\langle \pi_{i,s},g_{i,\ell}\rangle,\langle\pi_{i,f},g_{i,k}\rangle)\in \rightarrow_{i,\mathcal{R}}$ is the time measure. \hfill $\blacksquare$
\end{definition}

%%%%%%%%%%%%%%%%%%%%
\begin{figure}[t]
\begin{minipage}[t]{0.495\linewidth}
\centering
   \includegraphics[width =1.02\textwidth]{figures/ws_tri.pdf}
  \end{minipage}
\begin{minipage}[t]{0.495\linewidth}
\centering
    \includegraphics[width =1.02\textwidth]{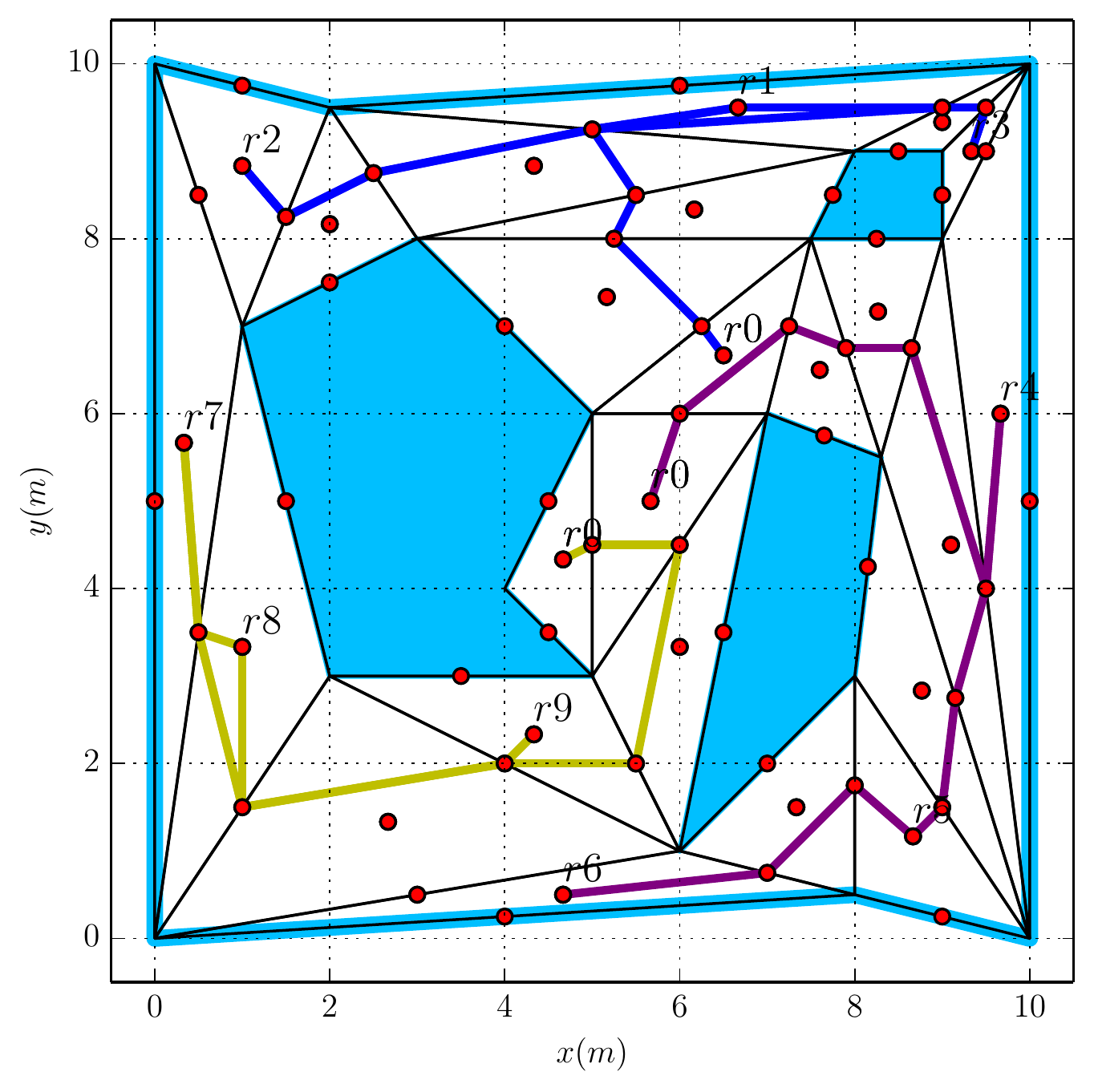}
  \end{minipage}
  \caption{Left: example of the constructed roadmap for the workspace model in Section~\ref{sec:simulate}. Blue areas are boundaries and obstacles. The waypoints and edges are shown by red points and lines; Right: example of the discrete plans for three source robots~$a_0,a_1,a_2$ (in blue, purple, yellow), with regions of interest marked by their labels. }
\label{fig:triangulation}
\end{figure}
%%%%%%%%%%%%%%%%%%%%

%==============================
\subsubsection{Local Plan Synthesis}\label{sec:synthesis}

The local plan of~robot~$i$, denoted by~$\tau_{i,\mathcal{R}}$, is an infinite path of~$\mathcal{R}_i$ whose trace satisfies its local task~$\varphi_i$.
We rely on the automaton-based model checking algorithm~\cite{baier2008principles,guo2015multi} to synthesize~$\tau_{i,\mathcal{R}}$, {whose description we omit here due to limited space}.
Particularly, the local plan~$\tau_{i,\mathcal{R}}$ has the prefix-suffix structure below and the minimum total cost as the summation of prefix and suffix costs:
\begin{equation}\label{eq:discrete-plan}
\tau_{i,\mathcal{R}} = \pi_{i,\mathcal{R}}^0\,\pi_{i,\mathcal{R}}^1 \cdots \pi_{i,\mathcal{R}}^{k_i-1}\, (\pi_{i,\mathcal{R}}^{k_i}\,\pi_{i,\mathcal{R}}^{k_i+1}\cdots \pi_{i,\mathcal{R}}^{K_i})^\omega,
\end{equation}
where the state~$\pi_{i,\mathcal{R}}^k\in \Pi_{i,\mathcal{R}}$, $\forall k=0,1,\cdots,K_i$ and~$K_i>0$ is the total length, $\pi_{i,\mathcal{R}}^0\,\pi_{i,\mathcal{R}}^1 \cdots \pi_{i,\mathcal{R}}^{k_i-1}$ is the prefix executed only once and $\pi_{i,\mathcal{R}}^{k_i}\,\pi_{i,\mathcal{R}}^{k_i+1}\cdots \pi_{i,\mathcal{R}}^{K_i}$ is the suffix to be repeated infinitely often. $\tau_{i,\mathcal{R}}$ provides an infinite sequence of motion and data-gathering actions to be performed by robot~$i$. Software implementation details can be found in~\cite{guo2015multi, package}.

\begin{example}\label{example:discrete-plan}
Consider the roadmap shown in Figure~\ref{fig:triangulation} within a clustered workspace. Three robots are deployed with different local tasks. For instance, robot~$a_0$ needs to visit~$r_1$, $r_2$ and~$r_3$ in sequence and perform the action~$g_1$ at each region. The resulting discrete plan~$\tau^0_{i,\mathcal{R}}$ is shown in Figure~\ref{fig:triangulation}. \hfill $\blacksquare$
\end{example}

Note that each source robot~$i\in \mathcal{N}^f$ synthesizes~$\tau_{i,\mathcal{R}}$  locally \emph{without} coordination with other robots. 
Thus,  robot~$i$ might not execute~$\tau_{i,\mathcal{R}}$ successfully by itself without the help of relay robots  to transfer data,
due to the infinite sequence of data-gathering actions in~$\tau_{i,\mathcal{R}}$ and its limited buffer size.

%==============================
\subsection{Coordination of Intermittent Meeting-Events}\label{sec:meet}

To execute the plan of each source robot~$i\in \mathcal{N}^f$, we need to ensure that its stored data is transferred to at least one relay robot~$j\in \mathcal{N}^l$ \emph{before} its buffer overflows.
The main difficulty lies in the limited communication range for both source and relay robots, meaning that both data transfer and coordination are only possible when two robots are within each other's communication range. 
As discussed in Section~\ref{sec:introduction}, instead of imposing all-time connectivity as in most related work~\cite{ji2007distributed,zavlanos2011graph,guo2015communication,guo2016hybrid,Derb11decentralized}, we propose here a distributed online coordination scheme where the communication network is allowed to become disconnected.  

The key idea is to design a method that allows source and relay robots each time they meet (i.e., connect to each other) to negotiate  \emph{when} and \emph{where} they should meet the \emph{next time}, while minimizing the waiting time at the new meeting location. Afterwards, they move independently without communication, until they meet again at the agreed location and time, and the same procedure repeats. 
In the sequel, we present a distributed coordination scheme for both the source robots and relay robots to schedule meeting events, which is based on  online \emph{request and reply} message exchanges, for {four} different scenarios: (i) the initial coordination phase; (ii) the real-time coordination for the next meeting event; (iii) the spontaneous meeting event; and (iv) the swapping of meeting events.

\subsubsection{Initial Coordination}\label{sec:initial-coordination}
Initially at~$t=0$, each source robot needs to coordinate its first meeting event with at least one relay robot. Denote by~$\mathcal{N}_i(t)\subset \mathcal{N}$ the set of robots that robot~$i\in\mathcal{N}$ can communicate with at time~$t\geq 0$, i.e.,~$\mathcal{N}_i(t)=\{j\in \mathcal{N}\,|\,\|p_i(t)-p_j(t)\|\leq r\}$. Then, denote by~$\mathcal{N}^l_{i}(t)=\mathcal{N}_i(0)\cap \mathcal{N}^l$ the set of relay robots that a source robot~$i\in \mathcal{N}^f$ is connected to at time~$t=0$.
{We impose the following assumption on the initial configuration:}
\begin{assumption}\label{assum:initial}
At time~$t=0$, each source robot~$i\in \mathcal{N}^f$ is connected to \emph{at least} one relay robot~$j\in \mathcal{N}^l$:~$\mathcal{N}_i^l(0)\neq \emptyset$. \hfill $\blacksquare$
\end{assumption}

\emph{\textbf{Meeting requests by source robots}}:
To begin with, every source robot~$i\in \mathcal{N}^f$ needs to estimate  where and  when it needs to meet with a relay robot~$j\in \mathcal{N}^l$, given its discrete plan~$\tau_{i,\mathcal{R}}$. We consider the following problem.

\begin{problem}\label{problem:find-state}
For each source robot~$i\in \mathcal{N}^f$, find the \textbf{first} waypoint in the $\tau_{i,\mathcal{R}}$ and the associated time that robot~$i$ needs to meet with a relay robot and transfer data, before robot~$i$'s buffer overflows.    \hfill $\blacksquare$
\end{problem}

To solve Problem~\ref{problem:find-state}, robot~$i\in \mathcal{N}^f$ needs to search through the future sequence of states in~$\tau_{i,\mathcal{R}}$ and determine the~\emph{first} state where the data stored in its buffer \emph{will} exceed its buffer size~$\overline{B}_i$ {if} it has \emph{not} met any relay robot to transfer its data in the meanwhile.
Denote by~$\pi_{i,\mathcal{R}}^{k_e}\in \tau_{i,\mathcal{R}}$ this state and by~$\pi_{i,\mathcal{R}}^{k_t}\in \tau_{i,\mathcal{R}}$ the current state of robot~$i$, where~$k_e>k_t\geq 0$.  Specifically, the index~$k_e$ of $\pi_{i,\mathcal{R}}^{k_e}\in \tau_{i,\mathcal{R}}$ is the index such that
\begin{equation}\label{eq:ke}
\textstyle \sum\nolimits_{k= k_t}^{k_e} D_i(g_{i,\ell_k}){\leq}\overline{B}_i, \;\sum\nolimits_{k= k_t}^{k_e+1} D_i(g_{i,\ell_k}){>} \overline{B}_i, 
\end{equation}
where~$\pi_{i,\mathcal{R}}^k=\langle \pi_{i,s_k}, g_{i,\ell_k}\rangle$, $\forall k_t\leq k\leq k_e$ and~$D_i(g_{i,\ell_k})$ is the number of data units gathered by  action~$g_{i,\ell_k}$ from~\eqref{eq:gather}.
  Thus, the buffer is {less or equal to its full capacity} up to~$\pi_{i,\mathcal{R}}^{k_e}$, but it will overflow at~$\pi_{i,\mathcal{R}}^{k_e+1}$ after performing action~$g_{i,\ell_{k_e+1}}$.

Then, robot~$i$ calculates the route and the associated time to transition from~$\pi_{i,\mathcal{R}}^{k_e}$ to~$\pi_{i,\mathcal{R}}^{k_e+1}$.
Without loss of generality, let~$\pi_{i,\mathcal{R}}^{k_e}|_{\Pi_i}=\pi_{i,s_i}$ and~$\pi_{i,\mathcal{R}}^{k_e+1}|_{\Pi_i}=\pi_{i,f_i}$. 
The shortest path from~$\pi_{i,s_i}$ to~$\pi_{i,f_i}$ is given by~$\Gamma_{i,s_if_i}=m_{i,s_i}m_{i,s_i+1}\cdots m_{i,f_i}$ from Section~\ref{sec:roadmap} and the associated time of reaching each waypoint~$m_{i,s_i}\in \Gamma_{i,s_if_i}$ is denoted by~$t_{i,k_i}\in T_{i,s_if_i}$, where~$T_{i,s_if_i}=t_{i,s_i}t_{i,s_i+1}\cdots t_{i,f_i}$ and~$s_i\leq k_i\leq f_i$. The time sequence~$T_{i,s_if_i}$ is calculated using the reference linear and angular velocities by~\eqref{eq:approx-time}.
As a result, the \emph{request} message from a source robot~$i\in \mathcal{N}^f$ to a relay robot~$j\in \mathcal{N}^l_{i}(0)$ at time~$t=0$, denoted by~$\textbf{Req}_{ij}(0)$, is given by
    \begin{equation}\label{eq:request}
      \textbf{Req}_{ij}(0) = (\Gamma_{i,s_if_i},\,T_{i,s_if_i}),\; \forall j\in \mathcal{N}^l_{i}(0),
      \end{equation}
where~$\Gamma_{i,s_if_i}$ and~$T_{i,s_if_i}$ are defined above. Simply speaking, robot~$i$ is requesting that robot~$j$ should come to meet at any of the waypoints within~$\Gamma_{i,s_if_i}$ at the associated time in~$T_{i,s_if_i}$.

\emph{\textbf{Replies by relay robots}}:
Upon receiving the requests from all source neighbors~$i\in \mathcal{N}_j^f(0)$, where~$\mathcal{N}_j^f(0)\triangleq \mathcal{N}_j(0)\cap \mathcal{N}^f$, each relay robot~$j\in \mathcal{N}^l$ should decide the location and time to meet each source robot~$i\in \mathcal{N}_j^f(0)$ and reply accordingly. 
Denote by~$\textbf{Rep}_{ji}(0)$ the reply message from robot~$j$ to robot~$i$ at time~$t=0$, which has the following structure:
\begin{equation}\label{eq:reply}
  \textbf{Rep}_{ji}(0)=(m_{ji},\,t_{ji}),\; \forall i\in \mathcal{N}_j^f(0)
  \end{equation}
  where~$m_{ji}\in M$ is the waypoint where robots~$i,j$ will meet and~$t_{ji}>t$ is the  time of the meeting event.

Particularly, given the requests~$\textbf{Req}_{ij}(0)=(\Gamma_{i,s_if_i},\,T_{i,s_if_i})$ by~\eqref{eq:request}, $\forall i\in \mathcal{N}_j^f(0)$, we intend to find a path~$\Gamma_j(0)=m_{j,1}m_{j,2}\cdots m_{j,S_j}$, where~$m_{j,s_j}\in M$, $\forall s_j=1,2,\cdots,S_j$ and an associated time sequence~$T_j(0)=t_{j,1}t_{j,2}\cdots t_{j,S_j}$ such that the following {two} conditions hold.
\emph{Condition one}: $\Gamma_j$ should intersect with~$\Gamma_{i,s_if_i}$ exactly once, i.e., there exists exactly one waypoint~$m_{ji}\in \Gamma_{i,s_if_i}$ that~$m_{ji}\in \Gamma_{j}$,~$\forall i\in \mathcal{N}_j^f$. Without loss of generality, let~$m_{ji} = m_{i,k_{ji}}$ where~$s_i\leq k_{ji}\leq f_i$ and $m_{ji} = m_{j, s_{ji}}$ where~$1\leq s_{ji} \leq S_j$; 
\emph{Condition two}: $\Gamma_j$ should minimize the sum of the differences in the predicated meeting time between robot~$j$ and each~$i\in \mathcal{N}_j^f(0)$, i.e.,
$\textstyle{\sum_{i\in \mathcal{N}_j^f}}|t_{i,k_{ji}}-t_{j,s_{ji}}|$, where~$t_{i,k_{ji}}\in T_{i,s_if_i}$ and~$t_{j,s_{ji}}\in T_{j}$ are the corresponding time instances of reaching~$m_{ji}$ in~$\Gamma_{i,s_if_i}(0)$ and~$\Gamma_j(0)$, respectively. Formally, we state the problem below.

\begin{problem}\label{problem:initial-mip}
Given $\textbf{Req}_{ij}(0)=(\Gamma_{i,s_if_i},\,T_{i,s_if_i})$, $\forall i\in \mathcal{N}_j^f(0)$, compute~$\Gamma_j(0)$ such that both conditions above hold. \hfill $\blacksquare$
\end{problem}

Problem~\ref{problem:initial-mip} is closely related to the well-known traveling salesman problem (TSP)~\cite{law2000traveling} but with {three} distinctions: the set of waypoints to be visited is to be determined by the solution; there is no need to return to the starting waypoint; and the cost is defined as the total waiting time over each waypoint instead of the total travel distance.  
The above problem is NP-hard~\cite{lavalle2006planning} as it contains the TSP as a special case. 
{A similar formulation appears in the computer wiring problem as discussed in~\cite{law2000traveling}.
To find the exact solution to Problem~\ref{problem:initial-mip}, we can transform it into a generalized TSP.} 
In particular, let~$\mathcal{N}_{j,+}^{f}= \mathcal{N}_j^f\cup \{j\}\cup\{\nu\}$, where~$\mathcal{N}_j^f$ is the set of source neighbors that robot~$j$ is connected to and~$\nu$ is an artificial node. 
Recall that the requests~$\textbf{Req}_{ij}(0)=(\Gamma_{i,s_if_i},\,T_{i,s_if_i})$ satisfy~$\Gamma_{i,s_if_i}=m_{i,s_i}m_{i,s_i+1}\cdots m_{i,f_i}$ and~$T_{i,s_if_i}=t_{i,s_i}t_{i,s_i+1}\cdots t_{i,f_i}$. 
For ease of notation, let~$\mathcal{I}_{i}^{sf}\triangleq\{s_i,s_i+1,\cdots,f_i\}$.

As mentioned in condition one, $\Gamma_j$ intersects with~$\Gamma_{i,s_if_i}$ exactly once, $\forall i\in \mathcal{N}_j^f$. Let this happen at the $(k_i)_{th}$ element of~$\Gamma_{i,s_if_i}$, where~$k_i\in \mathcal{I}_{i}^{sf}$, $\forall i\in \mathcal{N}_j^f$. 
{Let us define first the set of waypoints}~$\Upsilon = \{m_{j,0},m_{\nu,0},m_{i,k_i}, \forall i\in \mathcal{N}_i^f, \forall k_i\in \mathcal{I}_i^{sf}\}$, which includes all waypoints within each source robot~$i$'s path segment~$\Gamma_{i,s_if_i}$ and~$m_{j,0}=  m_{\nu,0} \triangleq (x_j(0),y_j(0))$. 
{Note that $\nu$ is an artificial node at the end of $\Gamma_j$}.
Also, to simplify the notation, we set~$\mathcal{I}_j^{sf}\triangleq \{k_{j,0}\}$ and~$\mathcal{I}_{\nu}^{sf}\triangleq \{k_{\nu,0}\}$, {where~$k_{j,0}=k_{\nu,0}\triangleq 0$ denote the first and the only element associated with nodes~$j$ and $\nu$, respectively.}
Furthermore, we define a \emph{cost} function $c:\Upsilon\times \Upsilon \rightarrow \mathbb{R}_{\geq 0}$ between any two nodes in~$\Upsilon$ such that: (i) for all~$i,h\in \mathcal{N}_i^f$, it holds that~
\begin{equation}\label{eq:cij}
c_{{k_i}{k_h}}\triangleq |t_{i,{k_i}} + T_j(m_{i,{k_i}}, m_{h,{k_h}}) -t_{h,{k_h}}|,
\end{equation}
where $\forall k_i\in \mathcal{I}_i^{sf}$ and $\forall k_h\in \mathcal{I}_h^{sf}$, where~$t_{i,{k_i}},t_{h,{k_h}}$ are the associated time instants of~$m_{i,{k_i}}, m_{h,{k_h}}$ obtained from~$T_{i,s_if_i}$ and~$T_{h,s_hf_h}$ and {the function~$T_j(\cdot)$ is the time it takes robot~$j$ to travel from~$m_{i,{k_i}}$ to~$m_{h,{k_h}}$, which can be computed similarly to~\eqref{eq:approx-time}};
(ii) for all~$i\in \mathcal{N}_i^f \cup \{j\}$, $c_{{k_i}{k_{\nu}}}=0$, $\forall k_i\in \mathcal{I}_i^{sf}$;
and (iii) for all~$i\in \mathcal{N}_i^f$, $c_{{k_{\nu}}{k_i}}=+\infty$, $\forall k_i\in \mathcal{I}_i^{sf}$, and $c_{{k_{\nu}}{k_j}}=0$. Furthermore, let~$\beta_{{k_i} {k_h}}\in \mathbb{B}$ be a Boolean variable so that~$\beta_{{k_i} {k_h}}=1$ if~$\Gamma_j$ contains a segment from~$m_{i,{k_i}}$ to~$m_{h, {k_h}}$, and is~$0$ otherwise, $\forall k_i\in \mathcal{I}_i^{sf}$, $\forall k_i\in \mathcal{I}_i^{sf}$ and $\forall i,h \in \mathcal{N}_{j,+}^{f}$. Given the above notations, we can formulate the  following integer linear program (ILP) {on the variables~$\{\beta_{k_i k_h}\}$}:
\begin{subequations}\label{eq:ilp}
\begin{align}
    &{\textbf{min}_{\{\beta_{k_i k_h}\}}}\sum_{k_i,\, k_h\in \mathcal{I}_{i,h}^{sf};\, i,h\in \mathcal{N}_{j,+}^{f}} c_{k_i k_h}  \,\cdot \beta_{k_i k_h} \tag{\ref{eq:ilp}}\\
& \textbf{s.t.} 
\quad \sum_{ k_h\in \mathcal{I}_{h}^{sf};\, h\in \mathcal{N}_{j,+}^{f}} \beta_{k_h k_i} = \sum_{h\in \mathcal{N}_{j,+}^{f};\,  k_h\in \mathcal{I}_{h}^{sf}} \beta_{k_i k_h}, \label{eq:ilp-c1}\\
& \qquad \qquad \qquad \qquad \qquad \; \forall k_i\in \mathcal{I}_{i}^{sf},\, \forall i\in   \mathcal{N}_{j,+}^{f}; \nonumber \\
  & \quad  \sum_{k_i,\, k_h\in \mathcal{I}_{i,h}^{sf}; \, h\in \mathcal{N}_{j,+}^{f}} \beta_{k_i k_h} = 1, \quad \forall i\in   \mathcal{N}_{j,+}^{f}; \label{eq:ilp-c2}\\
& \quad \alpha_{k_i} - \alpha_{k_h} + (N_j^f+1)\cdot \beta_{k_i k_h} \leq N_j^f,\label{eq:ilp-c3}\\ 
&\qquad \qquad \qquad \quad \forall k_i, k_{h}\in \mathcal{I}_{i,h}^{sf}; \, \forall i,\,h\in   \mathcal{N}_{j}^{f}\cup\{\nu\};\nonumber 
\end{align}
\end{subequations}
where the notation~$k_i, k_h\in \mathcal{I}_{i,h}^{sf}$ is equivalent to~$k_i\in \mathcal{I}_{i}^{sf}$ and~$k_h\in \mathcal{I}_{h}^{sf}$, similar arguments hold for~$k_i, k_j\in \mathcal{I}_{i,j}^{sf}$; and~$\alpha_{k_i}\in \mathbb{Z}$ is used to avoid the existence of multiple cycles, $\forall k_i\in \mathcal{I}_{i}^{sf}$ and $\forall i \in \mathcal{N}_{j}^f\cup \{\nu\}$.
The first two constraints~\eqref{eq:ilp-c1}-\eqref{eq:ilp-c2} ensure that exactly one element of~$\Gamma_{i,s_if_i}$ is intersected by~$\Gamma_j$,~$\forall i\in \mathcal{N}_{j}^{f}$.
The last constraint~\eqref{eq:ilp-c3} {and the definition of variables $\{\alpha_{k_i}\}$} ensure that all the waypoints~$m_{k_i}$ and $m_{k_h}$ that satisfy~$\beta_{{k_i}{k_h}}$ should belong to \emph{one} big cycle where~$m_{\nu,0}$ is the last waypoint and is connected to~$m_{j,0}$. 
Simply speaking, assume that an additional cycle of waypoints (excluding~$m_{j,0}$) with length~$N_c>0$ appears in~$\Gamma_j$. Summing up the inequalities within~\eqref{eq:ilp-c3} for all waypoints contained in that cycle would yield~$N_c\cdot(N_j^f+1)\leq N_c\cdot N_j^f$, leading to a contradiction. {More details can be found in~\cite{law2000traveling}.}

The ILP problem by~\eqref{eq:ilp} has~$\binom{\hat{N}}{2}$ Boolean variables and~$\hat{N}$ integer variables, where~$\hat{N}=\sum_{i\in \mathcal{N}_{j,+}^f}|\Gamma_{i,s_if_i}|$ is the total number of waypoints and $\binom{\hat{N}}{2}$ is the binomial coefficient. 
Thus the complexity of~\eqref{eq:ilp} is closely related to the number of source robots that each relay robot initially connects to and their request messages. 
{Note that~\eqref{eq:ilp} always has a solution as each relay robot~$j\in \mathcal{N}^l$ can reach any waypoint in~$\mathbf{M}$ (thus any waypoint in $\Gamma_{i,s_if_i}$).}
Lastly, given the solutions $\Gamma_j$ and~$T_j$, the replies~$\textbf{Rep}_{ji}(0)$ can be  derived as:
$m_{ji}=m_{i,k_i}$ and~$t_{ji}=t_{i,k_i}$, $\forall i\in \mathcal{N}_j^f(0)$. {Note that during the transition $(m_{j,s}, m_{j,s+1})\in \Gamma_{j}$, robot~$j$ can intersect with $\Gamma_{i,s_if_i}$ more than once but no data exchange will take place with robot $i$.}

\begin{remark}\label{remark:priority}
{If the waiting time of some source robots are penalized more than other source robots, we can readily incorporate this aspect by imposing static priorities within the source robots in Problem~\ref{problem:initial-mip}, by adding different weights in front of the waiting time $c_{k_ik_h}\cdot \beta_{k_ik_h}$.} \hfill $\blacksquare$
\end{remark}

%--------------------
\emph{\textbf{Confirmation by source robots}}:
Upon receiving the replies~$\textbf{Rep}_{ji}(0)$ from all relay robots~$j\in \mathcal{N}^l_i(0)$, each source robot~$i\in \mathcal{N}^f$  evaluates these replies and sends confirmations back.
In particular, denote by~$\textbf{Conf}_{ij}(0)$ the confirmation message from the source robot~$i$ to robot~$j\in \mathcal{N}^l_i(0)$ at time $0$ so that $\textbf{Conf}_{ij}(0) = \top$ if robot~$i$ confirms the meeting location and time with robot~$j$, while~$\textbf{Conf}_{ij}(0)=\bot$ if robot~$i$ refuses the reply and thus is \emph{not} committed to the meeting event with robot~$j$.
Given the replies~$\textbf{Rep}_{ji}(0)=(m_{ji},\,t_{ji})$,~$\forall j\in \mathcal{N}^l_i(0)$, robot~$i$ chooses the relay robot~$j_i^{\star}\in \mathcal{N}^l_i(0)$ that yields the \emph{minimum}  waiting time for itself at the first meeting event, i.e., 
\begin{equation}\label{eq:min}
j_i^{\star}=\textbf{argmin}_{j\in \mathcal{N}^l_i(0)} |t_{ji}-t_{i,k_{ji}}|,
\end{equation}
where~$s_i\leq k_{ji}<f_i$ satisfies that~$m_{i,k_{ji}}=m_{ji}$. Then,~$\textbf{Conf}_{ij_i^{\star}}(0) = \top$, for~$j_i^{\star}$ above, while~$\textbf{Conf}_{ij_i^{\star}}(0) = \bot$, $\forall j\in \mathcal{N}^l_i(0)$ and $j\neq j_i^{\star}$. Thus robot~$i$ marks~$m_{i,k_{j^\star_i}}$ as the meeting location with robot~$j^\star_i$ at time~$t_{i,k_{j^\star_i}}$.

  On the other hand, after receiving the confirmation messages~$\textbf{Conf}_{ij}(0)$ from source robots~$i\in \mathcal{N}_j^f(0)$, each relay robot~$j\in \mathcal{N}^l$ removes the meeting event with each source robot~$i$ from its path~$\Gamma_{j}(0)$ that was computed by~\eqref{eq:ilp} if the confirmation message from  the source robot~$i$ satisfies~$\textbf{Conf}_{ij}(0)=\bot$, $\forall i \in \mathcal{N}_j^f(0)$.
In other words, each relay robot~$j\in \mathcal{N}^l$ is only committed to meet the source robots that have confirmed the meeting event.

\subsubsection{Coordination for Next Meeting Event}\label{sec:next}
After the initial coordination, robots~$i$ and~$j_i^{\star}$ will meet at the waypoint~$m_{j^\star i}$ at time~$t=t_{j_i^\star i}$, $\forall i \in \mathcal{N}^f$.
For the ease of notation, we replace~$j_i^\star$ by~$j$ in this section.
Then, the data at robot~$i$'s buffer will be transferred to robot~$j$'s buffer and will be uploaded to the data center, see Section~\ref{sec:meet-execute}.
When this happens, the two robots will need to coordinate in order to determine their \emph{next} meeting event  following the procedure described below.

First, robot~$i$ needs to determine again the segment of its future plan when it should meet with a relay robot, before its buffer overflows.
The same equation as in~\eqref{eq:ke} can be applied given that the robot's {current} buffer size is zero and~$\pi_{i,\mathcal{R}}^{k_t}$ is the current state. 
Denote the new request message by~$\textbf{Req}_{ij}(t)=(\Gamma_{i,s_if_i},\,T_{i,s_if_i})$, where~$\Gamma_{i,s_if_i}=m_{i,s_i}\cdots m_{i,f_i}$ and~$T_{i,s_if_i}=t_{i,s_i}\cdots t_{i,f_i}$ are defined analogously as before.
Then, after receiving the request, robot~$j$ needs to reply with its preferred \emph{next}  location and time to meet with robot~$i$, denoted by~$m_{ji}^+$ and~$t_{ji}^+$, respectively. 
Let~$\Gamma_{j}(t)=m_{j,k_{j}}\cdots m_{j,f_{j}}$ be the \emph{remaining} path obtained by~\eqref{eq:ilp} at time~$t$, and the associated sequence of time instants is~$T_{j}(t)=t_{j,k_j}\cdots t_{j,f_{j}}$. 
Thus, the last committed meeting location and time  are given by~$m_{j,f_{j}}$ and $t_{j,f_{j}}$.
Then~$m_{ji}^+$ can be chosen among~$\Gamma_{i,s_if_i}$ such that moving from~$m_{j,f_{j}}$ to~$m_{j i}^+$ yields the minimum waiting time for robot~$i$.
Thus, it holds that $m_{ji}^+=m_{i, s_{j i}^+}$ and~$t_{ji}^+=t_{i, s_{j i}^+}$, where the index~$s_{j i}^+$ satisfies that
\begin{equation}\label{eq:next}
  \begin{split}
    s_{j i}^+=\textbf{argmin}_{s_i\leq s_{j i} \leq f_i }&\|t_{j,f_{j}}-t_{i, s_{j i}}\\
    &+T_{j}(m_{j,f_{j}},
\, m_{i, s_{j i}})\|,\\
    \end{split}
  \end{equation} 
where~$T_{j}(m_{j,f_{j}},\, m_{i, s_{j i}})$ is the time to navigate from waypoint~$m_{j,f_{j}}$ to~$m_{i, s_{j i}}$.
 $s_{j i}^+$ can be found by iterating through all waypoints in~$\Gamma_{i,s_if_i}$ to find the minimum waiting time.
% as the difference between the predicted time that robot~$j$ would arrive~$m_{i, s_{j i}^+}$
Therefore, the reply message from robot~$j$ to~$i$ is given by~$\textbf{Rep}_{ji}=(m_{ji}^+, \, t_{ji}^+)$.
After receiving the reply message, robot~$i$ will send back the confirmation as~$\textbf{Conf}_{ij}=\top$ and mark~$m^+_{ji}$ as the next meeting location with robot~$j$.
On the other hand, after the confirmation, robot~$j$ will \emph{concatenate} its path~$\Gamma_{j}$ with the shortest path from~$m_{j,f_{j}}$ to~$m_{ji}^+$ within~$\mathbf{M}$ and mark~$m_{ji}^+$ as the next meeting location with robot~$i$.

 %==============================
\subsubsection{Spontaneous Meeting Events}\label{sec:spontaneous}
When there are more than one relay robots in the team, 
it is possible that robot~$i\in \mathcal{N}_i^f$ meets with \emph{another} relay robot~$j'\in \mathcal{N}^l$ on its way to meet the confirmed relay robot~$j_i^{\star}$. We call this situation a \emph{spontaneous} meeting event.
In this case, robot~$i$ transfers the stored data in its buffer to robot~$j'$, and coordinates with~$j'$  for the next meeting event in a similar way as described in Section~\ref{sec:next}, but now robot~$i$ takes into account the fact that it will meet with~$j_i^{\star}$ at~$m_{j_i^{\star}i}^+$ as previously confirmed. 
Thus, the next path segment of~$\Gamma_i$ where robot~$i$ needs to meet with a relay robot should be calculated as in~\eqref{eq:ke} by setting~$\pi_{i,\mathcal{R}}^{k_t}=(m_{j_i^{\star}i}^+,\, g_0)$, i.e., robot~$i$'s buffer is zero {after} meeting robot~$j^\star$ at~$m_{j_i^{\star}i}^+$.
After the coordination with robot $j'$, robot~$i$ continues to meet robot~$j_i^{\star}$.
In this way, a source robot can meet and transfer data through~\emph{all} relay robots it has {met}, instead of being restricted to the relay robot it was connected to initially.
Each time it coordinates with a new relay robot, it takes into account the fact that it will meet with all the relay robots it has committed to and particularly its buffer will be empty after the last meeting event. 

%% %%%%%%%%%%%%%%%%%%%%%%%%%%%%%%%%%%%%%%%%%%%%% 
%% \begin{algorithm}[t]
%% \caption{Relay Robots Swapping Meeting Events} \label{alg:swap}
%% %\linesnumbered
%% %\dontprintsemicolon
%% \LinesNumbered
%% \DontPrintSemicolon
%% \KwIn{$\Gamma_{j_1},T_{j_1},\Gamma_{j_2},T_{j_2}$ at time~$t'$.}
%% \KwOut{the updated $\Gamma_{j_1},T_{j_1},\Gamma_{j_2},T_{j_2}$.}
%% Create a sequence~$\Upsilon'$ of~$(m_l,t_l)$ by composing~$\Gamma_{j_1}$ and $T_{j_1}$, and~$\Gamma_{j_1}$ and $T_{j_1}$.\\
%% Sort~$\Upsilon'$ by~$t_l$ into~$\Upsilon = (m_1,t_1)(m_2,t_2)\cdots(m_L,t_L)$.\\
%% Initialized $\Upsilon_1=[(m_{j_1}(t'),t')]$, $\Upsilon_2=[(m_{j_2}(t'),t')]$.\\
%% \ForAll{$(m_l,t_l)\in \Upsilon$}{
%% Compute waiting time~$\tau_{j_1,l}$ given~$\Upsilon_1$ and~$(m_l,t_l)$ by~\eqref{eq:next}.\\
%% Compute waiting time~$\tau_{j_2,l}$ given~$\Upsilon_2$ and~$(m_l,t_l)$ by~\eqref{eq:next}.\\
%% \eIf{$\tau_{j_1,l}\geq \tau_{j_2,l}$}{Append~$(m_l,t_l)$ to~$\Upsilon_{j_2}$.}{Append~$(m_l,t_l)$ to~$\Upsilon_{j_1}$.}
%% }
%% Decompose~$\Upsilon_{j_1}$ into~$\Gamma_{j_1},T_{j_1}$, and~$\Upsilon_{j_2}$ into~$\Gamma_{j_2},T_{j_2}$.
%% \end{algorithm}
%% %%%%%%%%%%%%%%%%%%%%%%%%%%%%%%%%%%%%%%%%%%%%% 

It is crucial that the source robot~$i$ \emph{still} meets its initially confirmed relay robot~$j^\star_i$ (even with an empty buffer), after a spontaneous meeting with another relay robot~$j'\in \mathcal{N}^l$. 
Due to the limited communication range, robot~$i$ can not inform robot~$j^\star_i$ to \emph{cancel} the confirmed next meeting. If robot~$i$ simply skips that meeting, robot~$j^\star_i$ will wait for robot~$i$ at the confirmed region indefinitely, which leads to a deadlock.

\begin{remark}\label{remark:inter-type-A}
Note that source robots  are \emph{not} allowed to transmit data to each other even when they are within the communication range. This assumption can be relaxed and is part of our ongoing work. \hfill $\blacksquare$
\end{remark}

\subsubsection{Relay Robots Swap Meeting Events}\label{sec:swap}

Until now, we have discussed the communication between source and relay robots. In this part, we discuss how relay robots can communicate with each other and swap their committed meeting events with source robots.
Particularly, assume that two relay robots~$j_1, j_2\in \mathcal{N}^l$ meet at time~$t'>0$. The remaining path and the associated time stamps of robot~$j_1$ are given by~$\Gamma_{j_1}(t')=m_{j_1,k_{j_1}}\cdots m_{j_1,f_{j_1}}$ and $T_{j_1}(t')=t_{j_1,k_{j_1}}\cdots t_{j_1,f_{j_1}}$, respectively.
Similarly,~$\Gamma_{j_2}(t')=m_{j_2,k_{j_2}}\cdots m_{j_2,f_{j_2}}$ and $T_{j_2}(t')=t_{j_2,k_{j_2}}\cdots t_{j_2,f_{j_2}}$ for robot~$j_2$.
Our goal is to rearrange the entries in~$\Gamma_{j_1}$ and~$\Gamma_{j_2}$ such that the total waiting time for source robots is further reduced. 
%% %------------------------------
%% \begin{table}[t]
%% \centering
%% %\setlength\tabcolsep{4.0pt}
%% {\renewcommand{\arraystretch}{1.4}
%% \scalebox{0.83}{
%% \centering
%%     \begin{tabular}{|c|c|c|}
%%         \hline
%%         {\bf{Robot}} &  {\bf{Timed path segment}} \\
%%         \hhline{|=|=|}
%%         $j_1$ & \begin{tabular}[c] {c}$\Gamma_{j_1}=[(8.1, 4.0), (9.6, 6.0), (6.6, 9.5), (4.6, 0.5), (4.3, 2.3)]$,\\ $T_{j_1}=[0,\, 8.7,\, 12.9,\, 15.1,\, 17.5]$ \end{tabular} \\
%%         \hline
%%         $j_2$ & \begin{tabular}[c] {c}$\Gamma_{j_2}=[(4.6, 4.3), (4.6, 9.5), (9.3, 9.0), (0.3, 5.6), (7.6, 6.0), (6.0, 2.1)]$,\\ $T_{j_2}=[0,\, 3.8,\, 6.6,\, 7.4,\, 16.6,\, 18.1]$ \end{tabular} \\
%%         \hhline{|=|=|}
%%         \multicolumn{2}{|c|}{\bf{After swapping meeting events}}\\
%%         \hhline{|=|=|}
%%         $j_1$ & \begin{tabular}[c] {c}$\Gamma_{j_1}=[(8.1, 4.0), (9.3, 9.0), (9.6, 6.0), (6.6, 9.5), (7.6, 6.0)]$,\\ $T_{j_1}=[0,\, 6.6,\, 8.7,\, 12.9,\, 16.6]$ \end{tabular} \\
%%         \hline
%%         $j_2$ & \begin{tabular}[c] {c}$\Gamma_{j_2}=[(4.6, 4.3), (4.6, 4.3), (4.6, 9.5), (0.3, 5.6),$\\ $(4.6, 0.5), (4.3, 2.3), (6.0, 2.1)]$, $T_{j_2}=[0,\, 3.8,\, 7.4,\, 15.1,\, 17.5,\, 18.1]$ \end{tabular} \\
%% \hline
%%     \end{tabular}
%%   }}
%% \caption{Example of swapping meeting events from Section~\ref{sec:swap}. The total waiting time has been reduced from~$43.3s$ to~$12.1s$. {An illustration of the paths before and after the swapping algorithm is show in Figure~\ref{fig:swap}.}}
%% \label{table:swap}  
%% \end{table}

% ==============================   
\begin{figure}[t]
  \centering
  \includegraphics[width =0.5\textwidth]{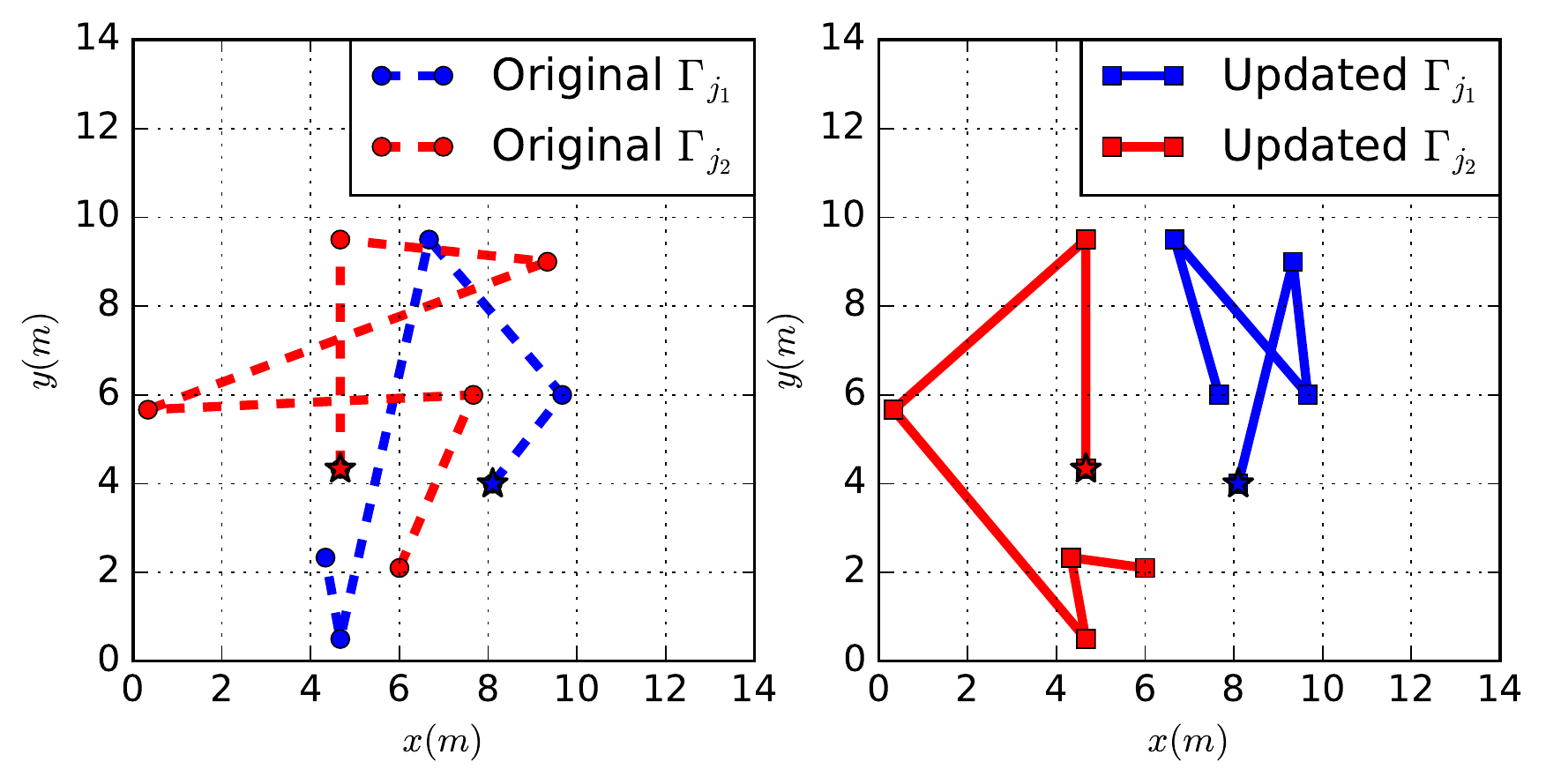}
  \caption{Visualization for Example~\ref{example:swap} of robots~$j_1,j_2$'s paths before and after the swapping algorithm presented in Section~\ref{sec:swap}. Initial position of robots~$j_1$ and~$j_2$ are indicated by filled stars. 
%{Numerical details can be found in Example~\ref{example:swap} and Table~\ref{table:swap}.}
}
  \label{fig:swap}
\end{figure}
% ==============================

Clearly, the optimal way to rearrange~$\Gamma_{j_1}$ and~$\Gamma_{j_1}$ that yields the minimum waiting time is to formulate a integer linear problem similar to~\eqref{eq:ilp}. It can be thought of as a traveling salesman problem with two salesmen. 
Here we propose a greedy algorithm that takes advantage of the \emph{ordered} structure of~$\Gamma_{j_1}$ and~$\Gamma_{j_1}$. 
First, we construct a new sequence of 2-tuples~$\Upsilon = (m_1,t_1)(m_2,t_2)\cdots(m_L,t_L)$, where~$L=|\Gamma_{j_1}|+|\Gamma_{j_2}|$. It holds that~$m_l= \Gamma_{j_1}[l_1]$ and  $t_l= T_{j_1}[l_1]$ with the index~$l_1$ that satisfies~$k_{j_1}\leq l_1 \leq f_{j_1}$, or  $m_l= \Gamma_{j_2}[l_2]$ and  $t_l= T_{j_2}[l_2]$ with the index~$l_2$ that satisfies~$k_{j_2}\leq l_2 \leq f_{j_2}$, $\forall l=1,\cdots,L$. More importantly, $\Upsilon$ is ordered by~$t_1\leq t_2\leq \cdots t_L$, i.e., an increasing time order according to which each waypoint should be visited.
Second, let $\Upsilon_1$ and $\Upsilon_2$ be two subsequences of~$\Upsilon$ that we want to construct. They are initialized by~$\Upsilon_1=(m_{j_1}(t'),t')$  and $\Upsilon_2=(m_{j_2}(t'),t')$, where~$m_{j_1}(t')$ and~$m_{j_2}(t')$ are the waypoints robots~$j_1$ and~$j_2$ are located, respectively. Then we iterate over each entry of~$(m_l, t_l)\in \Upsilon$ and evaluate the waiting time using~\eqref{eq:next} if the paths of robots~$j_1$ or~$j_2$ contain this entry as their last meeting event. If robot~$j_1$ yields a smaller waiting time, we add~$(m_l,t_l)$ to the end of~$\Upsilon_1$; otherwise, 
if robot~$j_2$ yields a smaller waiting time, we add~$(m_l,t_l)$ to the end of~$\Upsilon_2$.
At last,~$\Upsilon_1$ is decomposed into the new~$\Gamma_{j_1}$ and~$T_{j_1}$ for robot~$j_1$, while $\Upsilon_2$ is decomposed into the new~$\Gamma_{j_2}$ and~$T_{j_2}$ for robot~$j_2$. 
Since the above algorithm is greedy, we can compare the total waiting time under the new paths~$\Gamma_{j_1}$ and~$\Gamma_{j_2}$, which is then compared to the original total waiting time. 
If the total waiting time is reduced, the updated~$\Gamma_{j_1}$ and~$\Gamma_{j_2}$ will be used; otherwise, the paths remain unchanged.
In this way, some of the meeting events are swapped between relay robots~$j_1$ and~$j_2$ and the total waiting time is reduced.

\begin{example}\label{example:swap}
Consider two relay robots~$j_1$ and~$j_2$ with timed paths~$(\Gamma_{j_1},T_{j_1})$ and $(\Gamma_{j_2},T_{j_2})$ as shown in Figure~\ref{fig:swap}. 
The reference velocities are given in Section~\ref{sec:simulate}. 
The paths are updated by the above algorithm to swap their meeting events.
The total waiting time is reduced from~$43.3s$ to~$12.1s$. \hfill $\blacksquare$
\end{example}

%==================================================
%==================================================
%==================================================
 \subsection{Real-time Execution}\label{sec:execution}
Real-time execution of  the system consists of two essential components: (i) the local plan execution of source robots and (ii) the meeting events between source and relay robots.

 \subsubsection{Plan Execution}\label{sec:plan-execute}
After the system starts, each source robot~$i\in \mathcal{N}^f$ executes its discrete plan~$\tau_{i,\mathcal{R}} = \pi_{i,\mathcal{R}}^0\,\pi_{i,\mathcal{R}}^1 \cdots \pi_{i,\mathcal{R}}^{k_i-1}\, (\pi_{i,\mathcal{R}}^{k_i}\,\pi_{i,\mathcal{R}}^{k_i+1}\cdots \pi_{i,\mathcal{R}}^{K_i})^\omega$, where~$\pi^k_{i,\mathcal{R}}=\langle \pi_{i,s_k}, g_{i,\ell_k}\rangle \in \Pi_{i,\mathcal{R}}$, $\forall k=0,1,\cdots,K_i$, which was derived in Section~\ref{sec:synthesis}.
Starting from the initial position~$\pi_{i,s_0}$,  robot~$i$ first navigates to region~$\pi_{i,s_1}$ through the corresponding path~$\Gamma_{i,s_0s_1}$. The  control inputs follow the turn-and-forward switching control: ({C.1}): $v_i = 0$ and~$\omega_i =\omega_i^{\text{ref}}$; and ({C.2}): $v_i = v_i^{\text{ref}}$ and~$\omega_i = 0$. The controller ({C.1}) is activated to turn robot~$i$ towards the next waypoint in~$\Gamma_{i,s_0s_1}$ and then, ({C.2}) drives it forward with the reference speed.

Once robot~$i$ reaches~$\pi_{i,s_1}$, it performs the data-gathering action~$g_{i,\ell_1}$ there. 
After the action is completed, robot~$i$ navigates to region~$\pi_{i,s_2}$ through~$\Gamma_{i,s_1s_2}$ and performs action~$g_{i,\ell_2}$ there.
This procedure repeats itself until robot~$i$ reaches the~($k_e$)$_{\text{th}}$ state~$\pi_{i,\mathcal{R}}^{k_e}$ according to~\eqref{eq:ke}.
During this period of time, the amount of data units stored in robot~$i$'s buffer is increased incrementally by~$D_{i}(g_{i,\ell_k})$ using~\eqref{eq:gather}, $\forall k=0,1,\cdots,k_e$.
Then on its way from state~$\pi_{i,\mathcal{R}}^{k_e}$ to~$\pi_{i,\mathcal{R}}^{k_e+1}$, robot~$i$ meets with robot~$j_i^\star$ at waypoint~$m_{j_i^\star i}$.
It is ensured by the formulation of~\eqref{eq:ke} that the buffer is never overflowed and all data-gathering actions can be performed before reaching~$\pi_{i,\mathcal{R}}^{k_e+1}$.
After the meeting, robot~$i$ continues executing the rest of its plan until the next meeting event with~$j_i^\star$ or another relay robot.
Similarly, any relay robot~$j\in \mathcal{N}^l$ starts by executing the path~$\Gamma_j$ derived from~\eqref{eq:ilp} at time~$0$, which is then modified by adding new segments each time robot~$j$ coordinates with a source robot about the next meeting event.  

\begin{remark}\label{remark:dynamics}
{Note that if a different controller is used, such as PID-based line following, then \eqref{eq:approx-time} needs to be updated to reflect the estimation of traveling time between waypoints. Furthermore, more complex robot dynamics can also be incorporated as long as the robot's traveling time between two waypoints can be well estimated.  \hfill $\blacksquare$ }
\end{remark}

%====================
\subsubsection{Meeting Events Execution}\label{sec:meet-execute}
Assume that~$\Gamma_{i,s_if_i}=m_{i,s_i}m_{i,s_i+1}\cdots m_{i,f_i}$ is the path that robot~$i$ follows to navigate from~$m_{i,s_i}$ to~$m_{i,f_i}$, and assume also that its  confirmed meeting waypoint with robot~$j_i^\star$ is $m_{i, s^\star}$.
Starting from~$m_{i,s_i}$, robot~$i$ moves towards~$m_{i,s^\star}$. 
Then \emph{two} cases are possible: (i) if robot~$j_i^{\star}$ is already waiting at~$m_{i,s^\star}$, then robot~$i$ continues moving towards~$m_{i,s^\star}$ until robot~$j_i^\star$ is within its communication range. When this happens, robot~$i$ transfers {all} the data stored in its buffer to robot~$j_i^{\star}$. 
As a result, the stored data units in the buffers of robots~$i$ and~$j_i^\star$ are updated according to
$b_i(t^+) = 0$ and $b_{j^\star}(t^+) = b_{j^\star}(t^-) + b_i(t^-)$.
  When the data transfer is completed, robot~$j_i^{\star}$ uploads all the data in its buffer to the data station immediately. Thus its stored data is updated according to
$b_{j^\star}(t^+) = 0$.
{If the stored data  at robot~$i$ is more than robot~$j_i^\star$'s buffer size~$\overline{B}_{j_i^\star}$, these data are divided into smaller batches, which are then transferred to robot~$j_i^\star$ sequentially};
(ii) if robot~$j_i^{\star}$ has not arrived at~$m_{j_i^{\star}i}$ yet, then robot~$i$ waits until robot~$j_i^{\star}$ enters its communication range and then follows the same procedure as in~(i).

{Note that due to the waiting procedure described above, an \emph{exact} synchronization on the meeting times is not required between the source and relay robots. Namely, if either robot~$i$ or $j_i^\star$  arrives at a meeting location later than the agreed meeting time~$t_{ji}$ (e.g., due to uncertainty in robot velocity), the other robot that arrives early will wait until the data exchange happens. Therefore, the proposed method can handle uncertainty in the traveling times defined in~\eqref{eq:approx-time}. 
Furthermore, delays on current meeting events do not propagate to the future meeting events since all subsequent meeting events defined in~\eqref{eq:next} are always coordinated using the current meeting times. In other words, delays are always reset to zero whenever two robots meet.  
A numerical robustness analysis of the proposed approach can be found in Section~\ref{sec:simulate}.
}

\begin{proposition}\label{theo:correctness}
Under Assumption~\ref{assum:initial} stating that each source robot is connected to \emph{at least} one relay robot initially, the above framework ensures that each source robot~$i\in \mathcal{N}^f$ can satisfy its local task~$\varphi_i$ and also that its buffer will not overflow.
\end{proposition}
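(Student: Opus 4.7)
The plan is to establish the two claims separately: first that buffer overflow never occurs, and second that the trace produced by the executed plan satisfies $\varphi_i$. Both rely on an inductive argument over the sequence of meeting events coordinated for each source robot, so the scaffolding of the proof is an induction on meeting index $n=0,1,2,\ldots$ with the inductive hypothesis that, just after the $n$-th meeting event, robot $i$'s buffer equals zero and the segment of $\tau_{i,\mathcal{R}}$ executed up to that moment matches the prefix of $\tau_{i,\mathcal{R}}$ exactly.

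For the base case $n=0$, I would invoke Assumption~\ref{assum:initial} to ensure that $\mathcal{N}_i^l(0)\neq\emptyset$, so robot $i$ can issue $\textbf{Req}_{ij}(0)$ in~\eqref{eq:request} to at least one relay robot. I would then note that the ILP in~\eqref{eq:ilp} is always feasible because any relay robot can reach any waypoint in $\mathbf{M}$, so each contacted relay robot returns a reply $\textbf{Rep}_{ji}(0)$ in~\eqref{eq:reply}; the argmin selection in~\eqref{eq:min} therefore produces a committed meeting event with some $j_i^\star\in\mathcal{N}_i^l(0)$ at waypoint $m_{j_i^\star i}$, which by construction lies in $\Gamma_{i,s_if_i}$ and is therefore visited while executing the plan segment terminating at $\pi_{i,\mathcal{R}}^{k_e+1}$. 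The defining inequalities in~\eqref{eq:ke} then guarantee that the cumulative data gathered from $k_t=0$ up to $k_e$ is at most $\overline{B}_i$, so the buffer does not overflow before the meeting. The meeting execution protocol in Section~\ref{sec:meet-execute} empties the buffer, establishing the inductive hypothesis for $n=1$.

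For the inductive step I would argue that exactly the same mechanism applies after each confirmed meeting: the source robot re-applies~\eqref{eq:ke} starting from $b_i=0$ and its current plan state, and exchanges $\textbf{Req}_{ij}(t)$ with the relay robot it just met (which, crucially, is in communication range by the meeting condition); feasibility of~\eqref{eq:next} for that relay robot yields a new committed meeting before overflow. Here I would address the spontaneous-meeting case of Section~\ref{sec:spontaneous} by observing that the protocol treats such meetings as additional (not replacement) events and enforces that robot $i$ still attends $j_i^\star$, so the previously guaranteed meeting is preserved; hence the inductive chain of committed meetings is never broken. Because the buffer is reset at every meeting and~\eqref{eq:ke} strictly prevents overflow between consecutive meetings, the first claim follows for all time.

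For task satisfaction, I would observe that the above induction also certifies that robot $i$ executes the entire infinite sequence $\tau_{i,\mathcal{R}}$ faithfully: meeting events only introduce bounded waiting at waypoints already on the plan, never altering the order of states in $\Pi_{i,\mathcal{R}}$ visited nor skipping any data-gathering action. Since $\tau_{i,\mathcal{R}}$ was synthesized in Section~\ref{sec:synthesis} so that its trace under $L_i$ satisfies $\varphi_i$, the realized trace satisfies $\varphi_i$ as well. The main obstacle in this proof is not the overflow bound itself, which is essentially a restatement of~\eqref{eq:ke}, but rather the induction's assumption that a next meeting can always be coordinated; this requires ruling out deadlock through Assumption~\ref{assum:initial} plus the invariant that after every meeting the participating source and relay robots are co-located and can therefore exchange $\textbf{Req}$/$\textbf{Rep}$/$\textbf{Conf}$ and guarantee feasibility of~\eqref{eq:next} for the subsequent segment.
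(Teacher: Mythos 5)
Your proof is correct and follows essentially the same route as the paper's: Assumption~\ref{assum:initial} plus feasibility of~\eqref{eq:ilp} for the first meeting, the inequalities in~\eqref{eq:ke} to rule out overflow between meetings, the emptying of the buffer at each meeting execution, and the preservation of committed meetings under spontaneous encounters, with task satisfaction delegated to the correctness of the model-checking synthesis. The only difference is presentational — you make explicit, as an induction over meeting events, what the paper compresses into ``the same procedure repeats itself.''
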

\begin{proof} 
  First, the correctness of the local plan for each source robot is guaranteed by the model-checking algorithm, see~\cite{baier2008principles,guo2015multi}.
Moreover, since all local tasks are independent, these local plans can be executed independently. Thus we need to show that the plan can be executed successfully by each source robot, i.e., the data-gathering actions can be performed and the data buffer never overflows.
 Initially, each source robot is confirmed to meet with one relay robot by~\eqref{eq:ilp}.
 When the two robots meet, the stored data can be transferred and uploaded, before the source robot's buffer overflows due to the formulation of~\eqref{eq:ke}. 
Then execution of the meeting events above ensures {that every source robot always waits to meet a relay robot and transfer the stored data before performing the next gathering action that leads to buffer overflow}.
Similarly, the spontaneous meeting events described in Section~\ref{sec:spontaneous} ensure that all data-gathering actions up to the next meeting time can be performed and the data buffer never overflows. The same procedure repeats itself and holds for all source robots. 
\end{proof}

%============================================================
%============================================================
\section{Data Center Constraints, Robot Failures, and Dynamic Robot Membership}\label{sec:discussion}
In this section, we discuss how the proposed framework can be extended to account for a fixed data center, robot failure, and dynamic membership. {The later two characteristics enhance the robustness of the proposed approach.}

\subsection{Fixed Location of Data Center}\label{sec:fixed}
As mentioned in Remark~\ref{remark:immediate}, assume that a relay robot~$j$, instead of uploading its stored data immediately after meeting a source robot, needs to visit a \emph{fixed} data center~$H_j \in M$ within the workspace to upload the data, $\forall j\in \mathcal{N}^l$.
Then the proposed scheme can be modified as follows. Consider the meeting between robot~$j$ and the source robot~$i\in \mathcal{N}^f$. 
First, during the execution of the meeting event as discussed in Section~\ref{sec:meet-execute}, robot~$j$'s motion plan needs to be modified to include visiting the data center. 
In particular, if the amount of data robot~$i$ needs to transfer is less than robot~$j$'s buffer size, robot~$j$ can receive \emph{all} the data at once and then travel to the data center via the shortest path to upload the data. On the other hand, if the amount of data robot~$i$ needs to transfer is more than robot~$j$'s buffer size, robot~$j$ can receive the data in batches that equal to its buffer size, and then travel to the data center multiple times. {Consequently, for fixed data center locations, it may be beneficial to pair up source and relay robots of similar buffer sizes to reduce the number of times that relay robots need to travel to the data center. In this case, Algorithm~\ref{eq:ilp} can be modified by redefining $c_{{k_i}{k_h}}$ as follows:}
$$c_{{k_i}{k_h}}\triangleq |t_{i,{k_i}} + T_j(m_{i,{k_i}}, m_{h,{k_h}}) + N_{j i}\cdot T_{j}(m_{i, s_{j i}},H_{j})-t_{h,{k_h}}|,$$
{where~$N_{j i}\triangleq 2\cdot \left \lceil \overline{B}_{j}/\overline{B}_{i} \right \rceil$ is the number of times robot~$j$ needs to travel to the data center~$H_{j}$,~$\overline{B}_{j}/\overline{B}_{i}$ is the ratio between robot~$j$ and robot~$i$'s buffer size, the  function~$\left \lceil \cdot \right \rceil$ returns the previous largest integer; and $T_{j}(m_{i, s_{j i}},H_{j})$ is the time it takes for robot~$j$ to navigate from waypoint~$m_{i, s_{j i}}$ to $H_{j}$. As a result, the coordination obtained by the solution of problem~\eqref{eq:ilp} now considers the extra time that is needed for robot~$j$ to travel to~$H_{j}$ to empty robot~$i$'s buffer given robot~$j$'s buffer limit.}

Second, regarding the coordination of the next meeting event as discussed in Section~\ref{sec:next}, robot~$j$'s choice of the next meeting location from~\eqref{eq:next} can be modified as follows:
\begin{equation}\label{eq:next-new}
  \begin{split}
    s_{j i}^+&=\textbf{argmin}_{s_i\leq s_{j i} \leq f_i }\|t_{j,f_{j}}-t_{i, s_{j i}}\\
    &+T_{j}(m_{j,f_{j}},
\, m_{i, s_{j i}})+N_{j i}\cdot T_{j}(m_{i, s_{j i}},H_{j})\|,\\
    \end{split}
  \end{equation} 
where $N_{j i}$ and $T_{j}(H_{j},\, m_{i, s_{j i}})$ are defined above. Now~\eqref{eq:next-new} takes into account the extra time that is needed for robot~$j$ to travel to~$H_{j}$ in order to empty robot~$i$'s buffer. 
Last but not least, if there are multiple data centers that robot~$j$ can choose from, we can easily modify~\eqref{eq:next-new} to find the optimal one.

\subsection{Robot Failures}\label{sec:failure}
Let us assume first that a source robot~$i\in \mathcal{N}^f$ fails. If  robot~$i$ can still communicate with all relay robots it has committed to meet, then robot~$i$ can initiate a \emph{cancel} message to each of them to cancel the committed meeting events. In this way, these relay robots can skip the meeting with robot~$i$ and continue meeting the next source robot (instead of waiting indefinitely for robot~$i$). 
However, if robot~$i$ fails when it is not in the communication range of one or more relay robots, then  to avoid deadlock we can introduce a \emph{maximum} waiting time~$T_{\max}>0$, so that if a robot waits at a confirmed meeting location for a period of time longer than~$T_{\max}$, then it assumes that this meeting is canceled and continues executing its discrete plan until the next meeting event. 

Assume now that a relay robot~$j\in \mathcal{N}^l$ fails. If robot~$j$ can still communicate with the source robots it is committed to meet, it can cancel the meeting events directly as before.
However, in this case, the source robot~$i\in \mathcal{N}_j^f$ can \emph{not} simply skip this meeting event and continue its plan execution as its buffer will overflow. Instead, robot~$i$ needs to navigate to its next meeting location directly, upload its stored data with relay robot $j'\in \mathcal{N}^l$ and {more importantly keep the next meeting event with robot~$j'$ unchanged. In other words, robot~$i$ needs to meet with robot~$j$ consecutively twice.} Last but not least, if robot~$j$ is the \emph{only} relay robot that robot~$i$ is committed to, robot~$i$ may have to wait until it meets another relay robot to upload its data. This can only happen spontaneously as robot~$i$ has no knowledge of the location of other relay robots due to limited communication range. {This situation can be solved by allowing source robots to relay data to each other or exchange information about their meeting events, which is part of our ongoing work, see also Remark~\ref{remark:inter-type-A}.}

{At last, if several source or relay robots fail, the procedure described above will be performed for each fault robot.}
Moreover, if a robot recovers after failure, it will be treated as a robot that newly joins the system, as discussed below. 

\subsection{Dynamic Membership}\label{sec:new}

By dynamic membership, we mean that (i) existing robots within the team can leave the team without resulting in a deadlock; and (ii) new robots can join the team seamlessly without the need to restart the system. 
The first case can be achieved in a similar way as described in Section~\ref{sec:failure} to handle robot failures. Particularly, before a source robot leaves the team, it needs to meet with each relay robot that it is committed to meet, but \emph{without} coordinating the next meeting event.
In the same way, before a relay robot leaves the team, it still needs to meet with each source robot that it is committed to meet, without coordinating the next meeting event. Secondly, due to the distributed and online nature of the proposed scheme, new source or relay robots can be easily added to the system during run time. If the new relay robot~$j'$ that just joined the team is connected to an existing source robot~$i\in \mathcal{N}^f$, robot~$i$ will treat this meeting as a spontaneous meeting event as described in Section~\ref{sec:spontaneous}. The same procedure applies when an existing relay robot meets a new source robot that just joined the team during run time. {However, a new source robot must be connected to at least one relay robot when it joins the team.}

%%%%%%%%%%%%%%%%%%%%
% ==============================
\section{Case Study}\label{sec:simulate}
This section presents simulation results for a team of 12 data-gathering robots.
All algorithms are implemented in Python 2.7. ``\texttt{Gurobi}''~\cite{gurobi} and~``\texttt{poly2tri}''~\cite{poly2tri} are external packages {and~``\texttt{P\_MAS\_TG}''~\cite{package} is developed by the authors.}
All simulations are carried out on a laptop (3.06GHz Duo CPU and 8GB of RAM).

\subsection{System Description}\label{sec:sys-description}
All 12 robots satisfy the unicycle dynamics~\eqref{eq:unicycle}. There are 9 source robots (denoted by~$a_0,   a_2, \cdots, a_{8}$) and 3 relay robots (denoted by~$l_{1}, l_2,  l_3$). 
The workspace has size~$10m\times 10m$ and contains three polygonal obstacles, as shown in Figure~\ref{fig:triangulation}.  The triangular partition is derived from~\cite{poly2tri}. 
All robots' communication ranges are set to~$1m$. The reference linear and angular velocities are chosen randomly between~$[0.5,  0.8]m/s$ and $[0.1,  0.3]rad/s$. 
The buffer size of all source robots is chosen randomly between~$[3, 5]$ data units, while all relay robots have a buffer size of 5 data units.

To simplify the task description, we divide the source robots into three categories: (i) the first category ($a_0, a_1, a_2$) gathers type-1 data in region~$r_1$, type-2 data in region~$r_2$ and type-3 data in region~$r_3$ (in any order), infinitely often. This specification can be expressed by the LTL formula~$\varphi_{c_1}= \square\Diamond (r_2\wedge g_2) \wedge \square\Diamond (r_1\wedge g_1) \wedge \square\Diamond (r_3\wedge g_3)$;
(ii) the second category ($a_3, a_4, a_5$) gathers type-4 and type-5 data in regions~$r_4$, then type-4 data in region~$r_6$ (in this order) and also type-5 data in region~$r_5$, infinitely often, i.e., $\varphi_{c_2}=\square \Diamond (((r_4 \wedge g_4) \wedge \bigcirc (r_4 \wedge g_5)) \wedge \Diamond (r_6\wedge g_4)) \wedge \square \Diamond (r_5 \wedge g_5)$;
(iii) the third category ($a_6, a_7, a_8$) gathers type-6 data in regions~$r_7$, $r_9$ and type-7 data in region~$r_8$, infinitely often, i.e.,~$\varphi_{c_3}=\square \Diamond (r_8 \wedge g_7) \wedge \square \Diamond (r_7 \wedge g_6)\wedge \square \Diamond (r_9 \wedge g_6)$.

The actions~$g_2, g_3, g_4, g_6$ gather~$2$ units of data, while actions~$g_1, g_5, g_7$ gather~$1$ unit.
Moreover, any data-gathering action takes~$1s$ while the data transfer or upload actions take~$2s$. 
Initially, robots $a_0, a_3, a_6, l_1$ start from $(6.5m,   6.6m)$, robots $a_1, a_4, a_7, l_2$ start from $(5.6m,   5.0m)$, and robots $a_2, a_5, a_8, l_3$ from $(4.6m,   4.3m)$.  
Thus every source robot is connected to at least one relay robot, as required by Assumption~\ref{assum:initial}. 
%% %%%%%%%%%%%%%%%%%%%%
%% \begin{figure}[t]
%% %% \begin{minipage}[t]{0.495\linewidth}
%% %% \centering
%% %%    \includegraphics[width =1.02\textwidth]{}
%% %%   \end{minipage}
%% %% \begin{minipage}[t]{0.495\linewidth}
%% %% \centering
%% %%     
%% %%   \end{minipage}
%% \centering
%% %\includegraphics[width =0.49\textwidth]{figures/snap24.pdf}
%% \includegraphics[width =0.49\textwidth]{}
%%   \caption{Snapshot of simulation at~$50.0s$. source and relay robots are shown as red and blue squares, while the stored data units in buffers are indicated by filled black circles. The data-gathering actions, data transfer and upload actions are shown by filled green text boxes. All robots and regions of interest are labeled by their names. {In the above figure, source robot~$a_3$ is performing action~$g_5$, robot~$a_1$ is sending $5$ data units to~$l_3$ and robot~$l_2$ is uploading~$1$ data unit.} The complete simulation video is online~\cite{tro-video}.}
%% \label{fig:snaps}
%% \end{figure}
%% %%%%%%%%%%%%%%%%%%%%%%%%%%%%%%

%%%%%%%%%%%%%%%%%%%%
\begin{figure}[t]
\centering
   \includegraphics[width =0.35\textwidth]{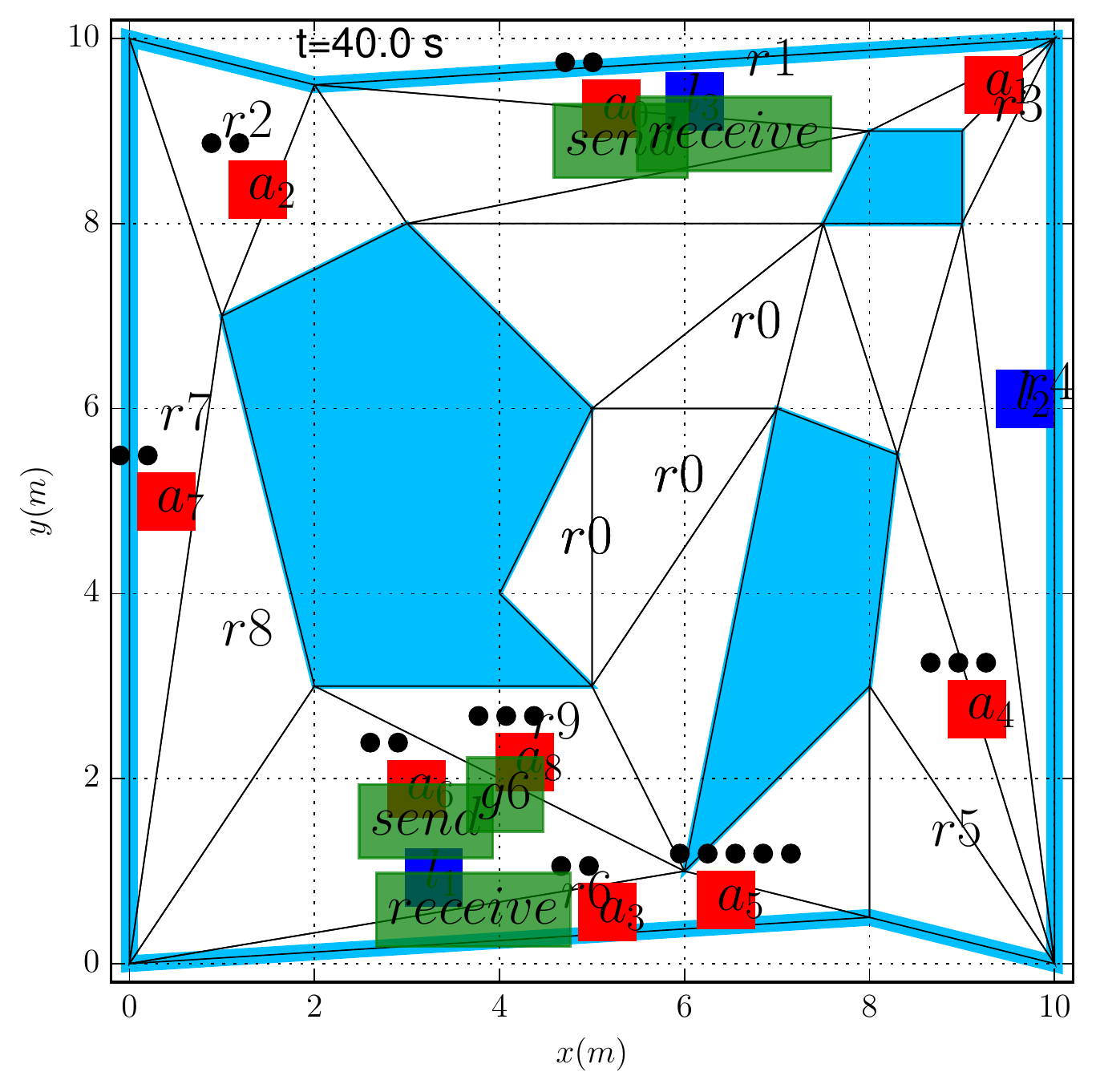}
  \caption{{A} snapshot of the simulation at~$40s$. source and relay robots are red and green squares, while the stored data units are indicated by black circles. The data-gathering actions, data transfer and upload actions are shown by filled green text boxes, e.g., ``$g_6$, send, receive, upload''. All robots and regions of interest are labeled by their names.}
\label{fig:snaps}
\end{figure}
%%%%%%%%%%%%%%%%%%%%%%%%%%%%%%

%% %------------------------------
%% \begin{figure}[t]
%% \centering
%% \includegraphics[width =0.49\textwidth]{figures/snap50.pdf}
%% \caption{Snapshots of simulation at~$50.0s$. source and relay robots are shown as red and blue squares, while the stored data units in the buffer are indicated by filled black circles. The data gathering actions, data transfer and upload actions are shown by filled green text boxes. All robots and regions of interest are labeled by their names. The complete simulation video is online~\cite{tro-video}.}
%% \label{fig:snaps}
%% \end{figure}
%% %------------------------------

\subsection{Simulation Results}\label{sec:sim-results}
First, the roadmap of each robot is constructed using a triangular partition of the workspace, as described in Section~\ref{sec:roadmap}.
For robots~$a_0,a_1,a_2$, the FTS~$\mathcal{R}_i$ has $16$ nodes and $112$ edges, the NBA~$\mathcal{A}_{\varphi_i}$ has~$4$ nodes and $13$ edges, and the product~$\mathcal{P}_i$ has~$64$ nodes and~$476$ edges.
For robots~$a_3,a_4,a_5$, the FTS~$\mathcal{R}_i$ has $12$ nodes and $72$ edges, the NBA~$\mathcal{A}_{\varphi_i}$ has~$7$ nodes and $32$ edges, and the product~$\mathcal{P}_i$ has~$84$ nodes and~$342$ edges.
For robots~$a_6,a_7,a_8$, the FTS~$\mathcal{R}_i$  has $12$ nodes and $72$ edges, the NBA~$\mathcal{A}_{\varphi_i}$ has~$4$ nodes and $13$ edges, and the product~$\mathcal{P}_i$ has~$48$ nodes and~$312$ edges.

%%%%%%%%%%%%%%%%%%%%
\begin{figure}[t]
\centering
   \includegraphics[width =0.48\textwidth]{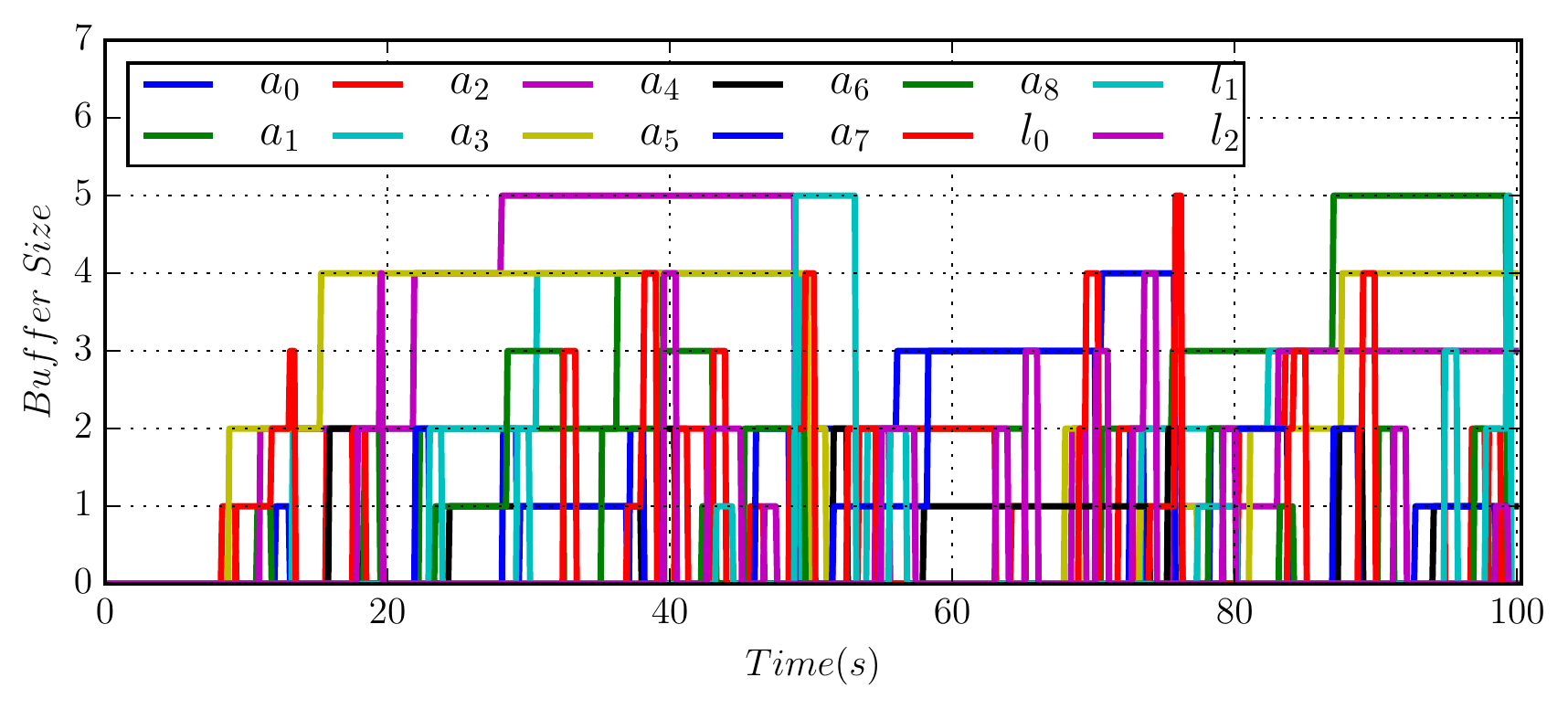}
  \caption{Stored data at each robot's  buffer during the simulation. The buffer sizes of robots~$a_0,a_1,\cdots,a_8$ and~$l_0,l_1,l_2$ are set to $[4,  5,  3,  4,  5,  5,  4,  5,  3,  5,  5,  5]$, which are respected for all time.}
\label{fig:buffer}
\end{figure}
%%%%%%%%%%%%%%%%%%%%

Then each source robot synthesizes its discrete plan using the algorithm in~\cite{guo2015multi} and the package~\cite{package}. 
It took approximately~$0.03s$, $0.05s$ and~$0.01s$ for the above three groups to synthesize their discrete plans.
For instance,~$a_0$ has prefix cost~$57.22$ and suffix cost~$46.14$, while~$a_3$ has prefix cost~$60.60$ and suffix cost~$45.69$. 
%Specifically, the first category has the discrete plan~$\tau_{c_1}=r_0 (r_1 g_1 r_2 g_2 r_3 g_3)^\omega $, the second category~$\tau_{c_2}=r_0 (r_4 g_4 r_4 g_5 r_5 g_5 r_6 g_4)^\omega$, and the third category~$\tau_{c_3}=r_0 (r_7 g_6 r_8 g_7 r_9 g_6)^\omega$.
%The  request and reply messages exchanged between relay robot~$l_{1}$ and its neighboring source robots~$a_{0},  a_3,  a_6$ at~$t=0$ are shown in Table~\ref{ILPtable}.
It took $0.3s$ by Gurobi~\cite{gurobi} to find the optimal solution of~\eqref{eq:ilp}, which determines the initial paths of all relay robots.
The discrete plans are executed according to Section~\ref{sec:plan-execute}, while the data are transferred and uploaded during the meeting events as described in Section~\ref{sec:meet-execute}.
The coordination for the next meeting event and spontaneous meetings follow Sections~\ref{sec:next} and~\ref{sec:spontaneous}.
We simulate the system for~$100s$. {A snapshot} of the simulation at~$40s$ is shown in Figure~\ref{fig:snaps}, where we show the number of data units stored at each robot's buffer and the action taken by each robot.
The evolution of the stored data units at each robot's buffer is shown in Figure~\ref{fig:buffer}. 
The maximum number of connected robots remains below~$5$ during most of the simulation, as shown in Figure~\ref{fig:max-component}. Thus the communication network among the robots is almost \emph{never} connected.
Furthermore, we also monitor the times that  relay robots~$l_0,l_1,l_2$ swap meeting events as described in Section~\ref{sec:swap}. Figure~\ref{fig:swap-event} shows the reduction in total waiting time after two relay robots swapping their meeting events.
In total,~$137$ units of data are uploaded, as shown in Table~\ref{table:compare-data} and Figure~\ref{fig:total}.
%Last but not least, as discussed in Section~\ref{sec:discussion}, the proposed scheme can be easily extended to take into account other scenarios, e.g., fixed data center and robot failures. We implement the proposed methods and simulate the system under these scenarios.  
The complete simulation videos can be found in~\cite{tro17-videos}.

%%%%%%%%%%%%%%%%%%%%
\begin{figure}[t]
\centering
   \includegraphics[width =0.35\textwidth]{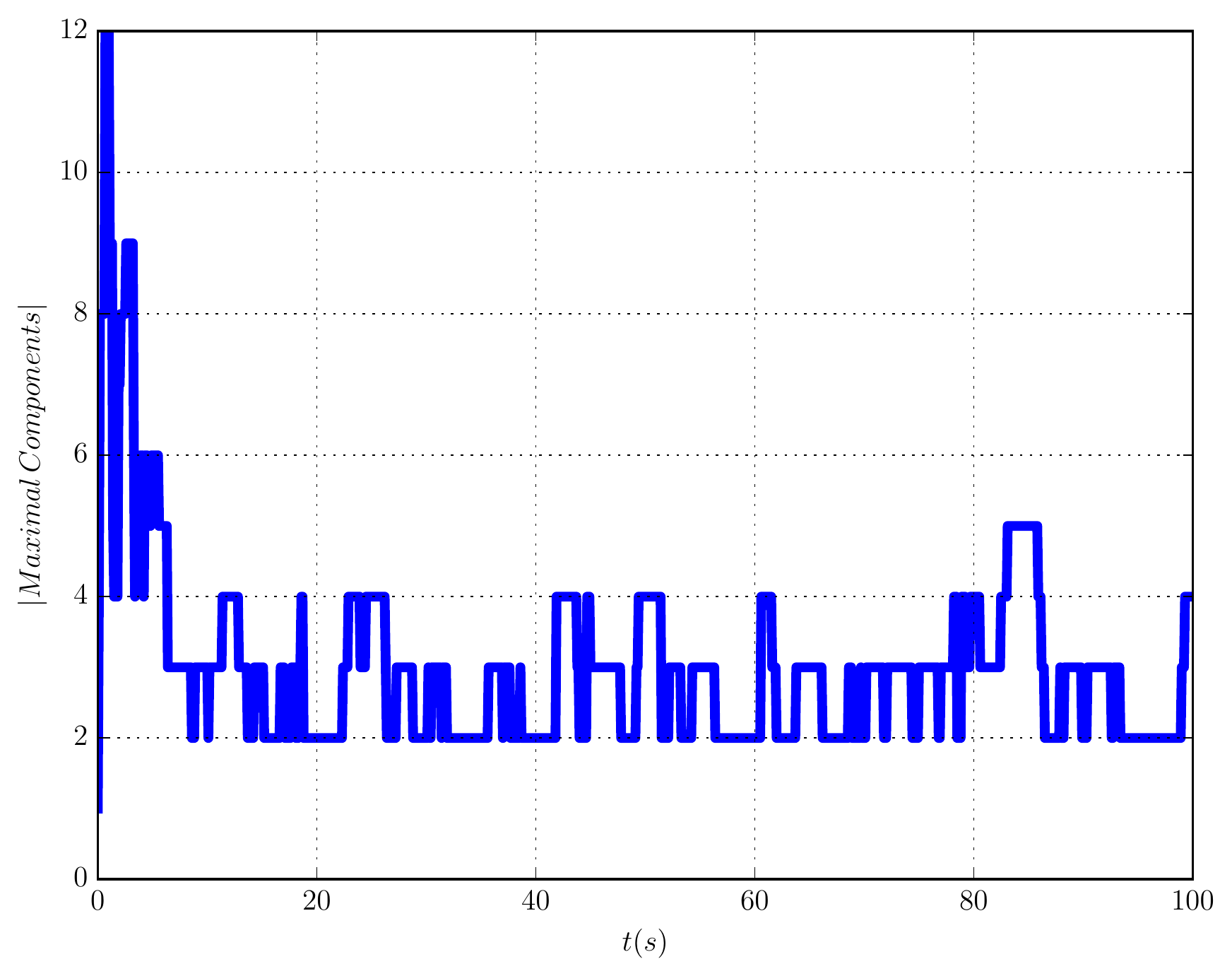}
  \caption{The evolution of the size of maximal components of the communication graph (of size~$12$), i.e., the maximal number of connected robots, given the uniform communication range~$1m$.}
\label{fig:max-component}
\end{figure}
%%%%%%%%%%%%%%%%%%%%

%%%%%%%%%%%%%%%%%%%%
\begin{figure}[t]
\centering
   \includegraphics[width =0.5\textwidth]{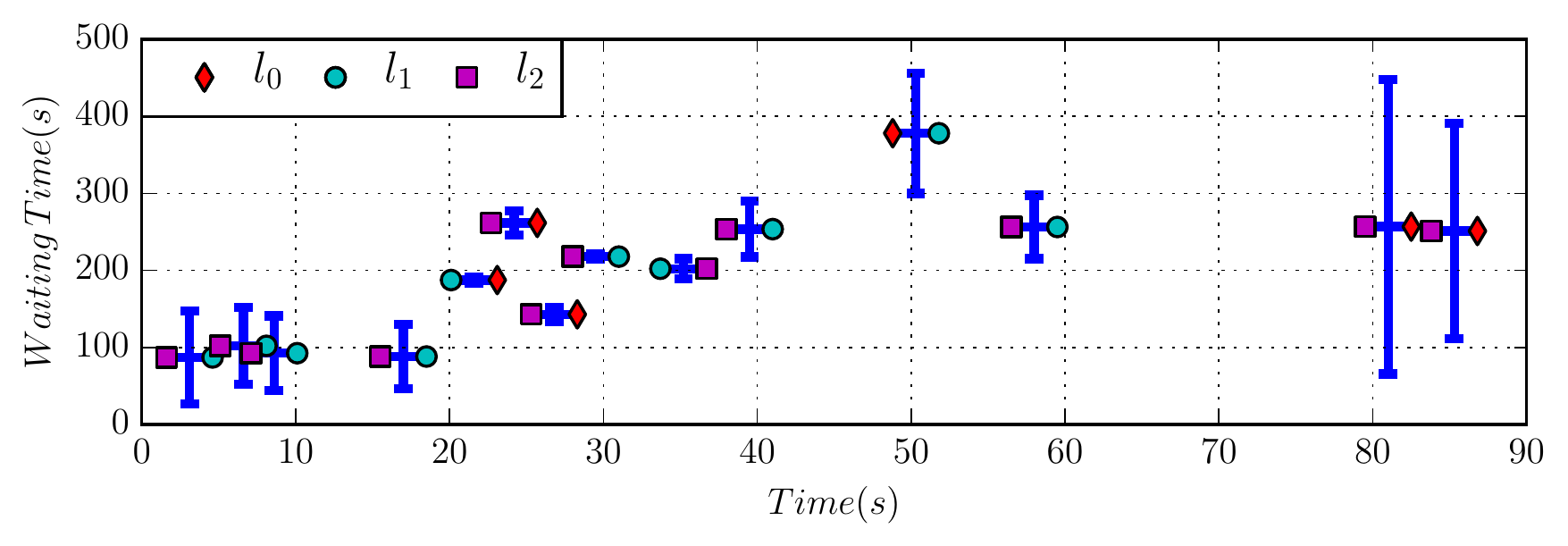}
  \caption{History of relay robots~$l_0,l_1,l_2$ swapping meeting events during the simulation. The high and low points of the error bar indicate the total waiting time before and after the swapping, respectively.}
\label{fig:swap-event}
\end{figure}
%%%%%%%%%%%%%%%%%%%%

%%%%%%%%%%%%%%%%%%%%
\begin{figure}[t]
\begin{minipage}[t]{0.495\linewidth}
\centering
   \includegraphics[width =1.02\textwidth]{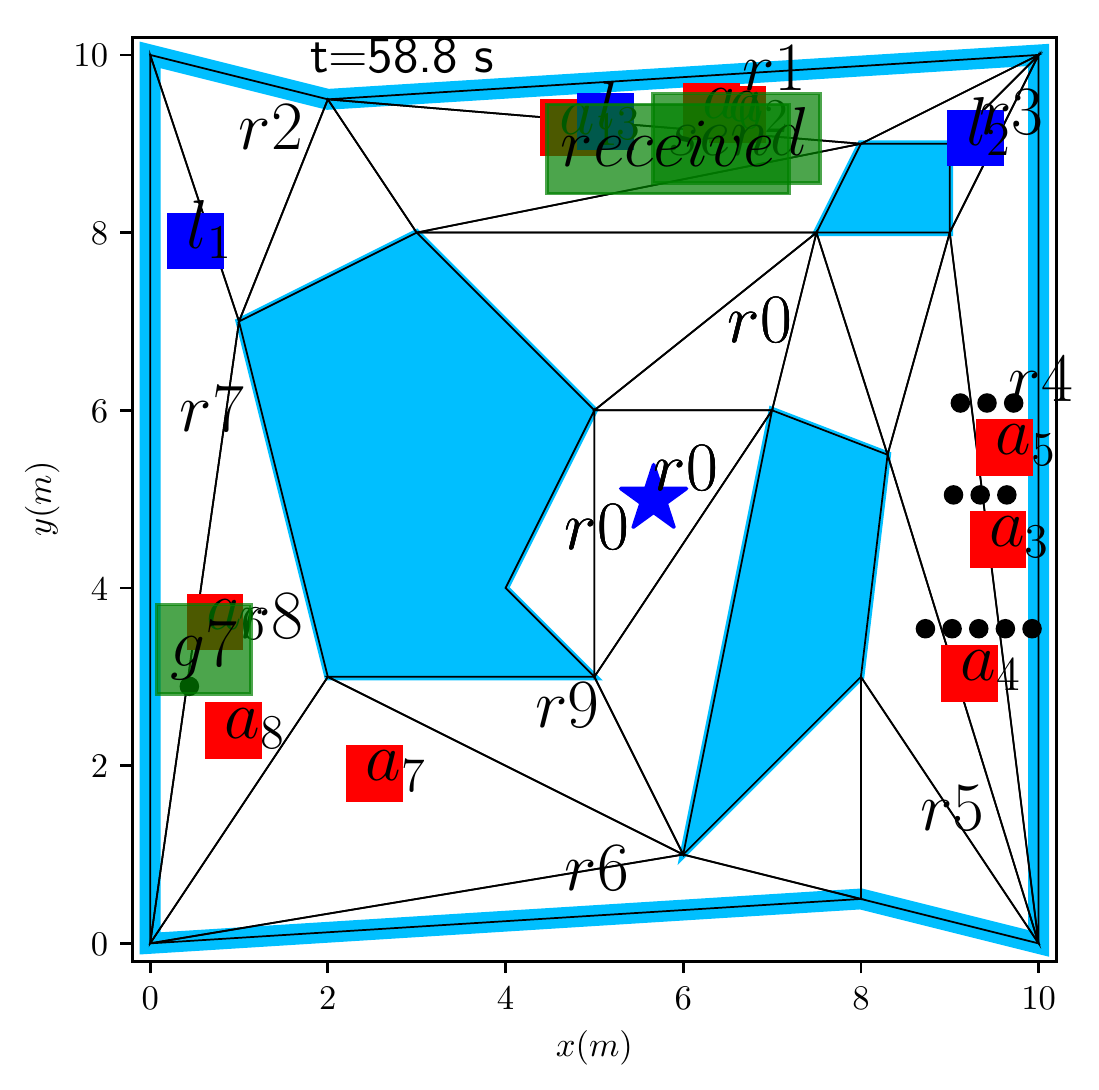}
  \end{minipage}
\begin{minipage}[t]{0.495\linewidth}
\centering
    \includegraphics[width =1.02\textwidth]{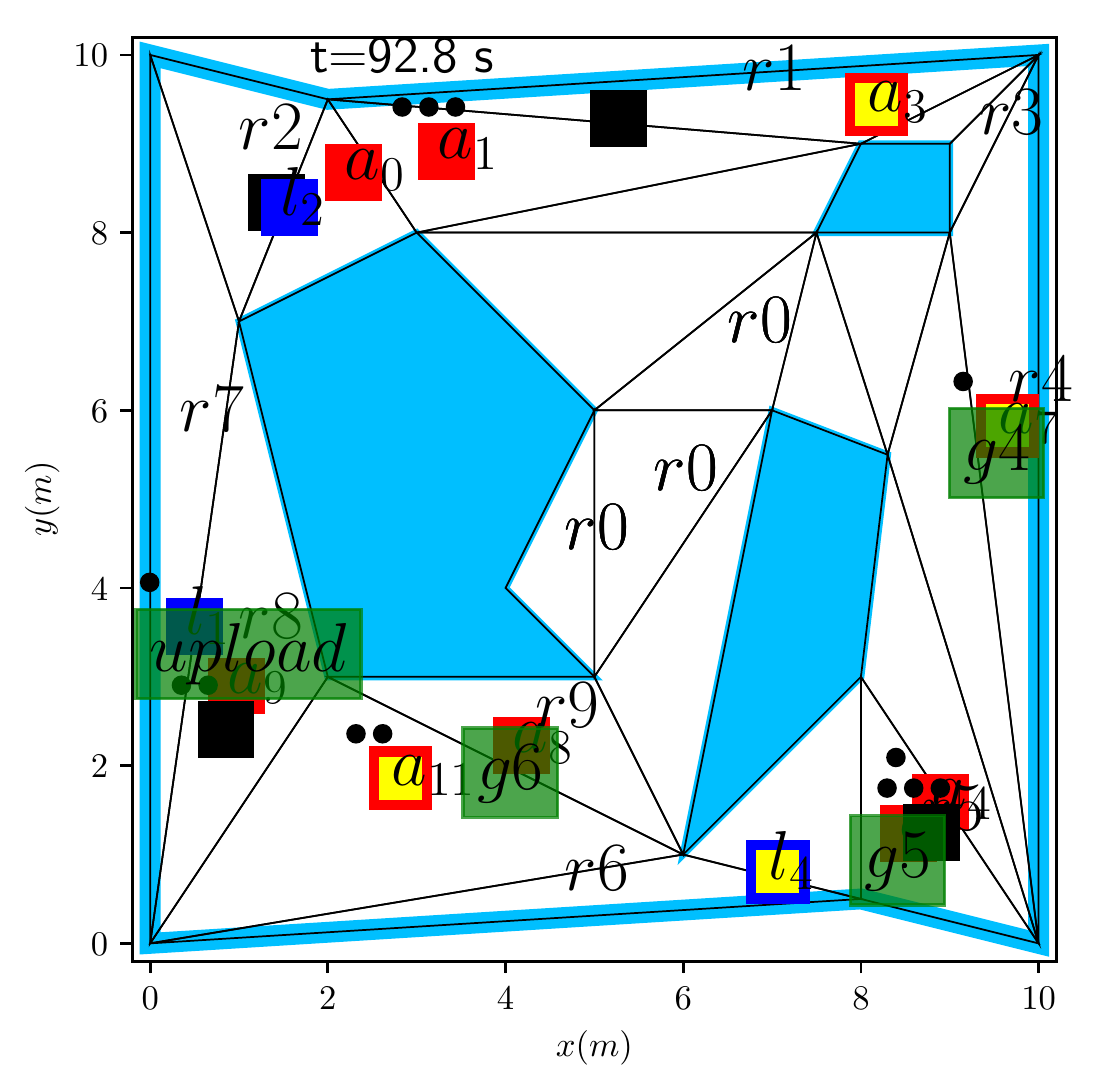}
  \end{minipage}
  \caption{Left: a snapshot of the simulation where a fixed upload center is given for each relay robot (marked by the blue star); Right: a snapshot of the simulation where existing robots fail (in black squares) and new robots join the team (marked by yellow squares).}
\label{fig:extend}
\end{figure}
%%%%%%%%%%%%%%%%%%%%

%%%%%%%%%%%%%%%%%%%%
\begin{figure}[t]
\centering
   \includegraphics[width =0.48\textwidth]{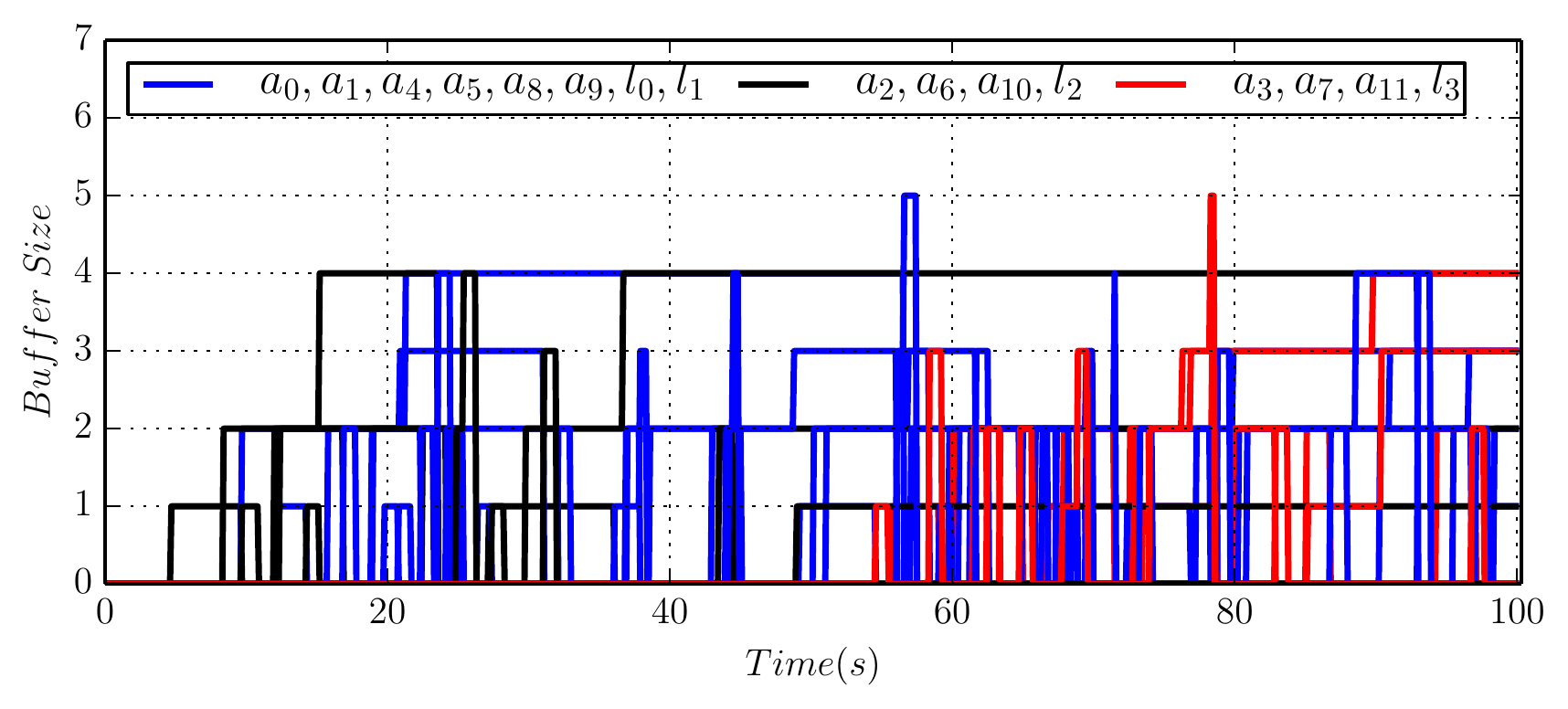}
  \caption{Stored data at each robot's  buffer for the scenario where three source robots  $a_2,a_6,a_{10},l_2$ fail at time~$t=50s$ (in black lines). At the same time robots $a_3,a_7,a_{11},l_3$ join the team (in red lines). The other robots are shown in blue lines. The buffer sizes are set to $[4,  5,  3,  5, 4,  5,  5,  5, 4,  5,  3,  5, 5,  5,  5, 5]$, which are all respected.}
\label{fig:buffer_new_member}
\end{figure}
%%%%%%%%%%%%%%%%%%%%
{Furthermore, in order to demonstrate the robustness of the proposed approach to uncertainties in the robots' traveling times, we have simulated the case where the traveling velocity of all robots is subject to additive random noise (with zero mean and variance equal to $20\%$ of the velocity value.). As shown in Figure~\ref{fig:total}, the delays in the meeting events caused by uncertain traveling times are not propagated across the network and the total amount of gathered data within~$100s$ in this case is $96$ (close to $137$ in the nominal case).}

Last but not least, as discussed in Section~\ref{sec:discussion} and shown in Figure~\ref{fig:extend}, the proposed scheme can be easily extended to take into account other scenarios, e.g., fixed data center, robot failures and new members.
First, we choose a fixed data center located at coordinate $(8.3,\,7.2)$ that all relay robots need to visit to upload its stored data. Instead of uploading the data directly, a local planning module is used by each relay robot to navigate to this fixed data center as proposed in Section~\ref{sec:fixed}.
{Second, we introduce faults to source robots~$a_2$, $a_6$, $a_{10}$ and relay robot~$l_2$ at time~$50s$, when they all stop moving and remain static. 
Moreover, three new source robots~$a_3,a_7,a_{11}$ and one relay robot~$l_3$ are added to the system (thus 16 robots in total), with the source robots having the same task description as three groups described earlier. }
The evolution of the stored data at each robot's buffer is shown in Figure~\ref{fig:buffer_new_member}. It shows that the buffer of these faulty robots remains unchanged after the faults occur, while the rest of the team (along with the new members) follow the reconfiguration scheme from Sections~\ref{sec:failure} and~\ref{sec:new} while respecting the buffer constraint. It can be seen from the simulation results that for the clustered workspace considered here, the meeting events with a faulty robot are canceled once the maximum waiting time is reached and furthermore the new robots can easily join the network via the spontaneous meeting events.
Simulation videos under these extended scenarios can be found in~\cite{tro17-videos}.

\subsection{Comparisons to Other Approaches}\label{sec:compare}

In this part, we compare the data-gathering performance of the proposed scheme to the centralized approach and two static approaches introduced below. Simulation videos for all three approaches can be found in~\cite{tro17-videos}.

\subsubsection{Centralized Approach}\label{sec:centralized-comp}
{As mentioned in Section~\ref{sec:centralized}, the centralized solution provides the optimal solution in terms of total distance traveled.} For this case study, the product motion model has approximately~$16^3\cdot 12^3\cdot 12^3 \cdot 24^3\approx 1.6\times 10^{14}$ states and $112^3\cdot 72^3\cdot 72^3 \cdot 70^3\approx 7.3\times 10^{22}$ transitions. The product B\"uchi automaton has approximately $4^3\cdot 7^3\cdot 4^3 \approx 1.4\times 10^6$ states and $13^3\cdot 32^3\cdot 13^3 \approx 1.5\times 10^{11}$ transitions. Thus to construct the product automaton for the whole system is computationally infeasible. 
{Moreover, we provide a numerical analysis to compare the optimality and computational complexity of the proposed approach to the centralized method, for problems of smaller size that can be handled using the centralized method. The results are shown in Table~\ref{table:compare-opt}. It can be seen that (i) for small systems (with $3-5$ robots) the centralized method provides an optimal solution that has a slightly smaller total cost of the plan suffix than the proposed approach. However, as mentioned in Section~\ref{sec:centralized}, this centralized plan can only be executed in a synchronized way and is not robust to robot failures; (ii) for larger systems (with more than 3 robots), the centralized method fails to provide a solution within reasonable time (where $\mathcal{P}$ has more than $1$ billion states), while in contrast our approach can scale much better, even to system with $7$ relay robots and $21$ source robots.}

%%%%%%%%%
\begin{table}[t]
\begin{center}

{\renewcommand{\arraystretch}{1.4}
\scalebox{1.05}{
    \begin{tabular}{| c| c | c |c |c|}
    \hline
   \textbf{Method} & $(\mathcal{N}^l, \mathcal{N}^f)$ & $\mathcal{P}$ & $C_{\texttt{suf}}$ & Time[s] \\ 
        \hhline{|=|=|=|=|=|}
 \multirow{4}{*}{Proposed} & $(1, 1)$ & $(64,4.7\texttt{e}3)$ & 37.2 & $0.1$s \\
\cline{2-5}
  & $(1, 2)$ & $(128,9.4\texttt{e}3)$ & 39.5 & $0.1$s \\
\cline{2-5}
  & $(1, 3)$ & $(3.6\texttt{e}3,1.3\texttt{e}4)$ & 46.7 & $0.18$s \\
\cline{2-5}
  & $(5, 15)$ &$(1.8\texttt{e}4,6.7\texttt{e}4)$  & 59.1 & $3.5$s \\
\cline{2-5}
  & $(7, 21)$ & $(2.7\texttt{e}4,9.2\texttt{e}4)$ & 67.9 & $575$s \\
        \hhline{|=|=|=|=|=|}
 \multirow{3}{*}{Centralized} & $(1, 1)$ & $(2.5\texttt{e}3,4.4\texttt{e}4)$ & 34.6 & $13.5$s \\
\cline{2-5}
 & $(1, 2)$ & $(3.1\texttt{e}5,2.2\texttt{e}7)$ & 36.4 & $16.5$h \\
\cline{2-5}
 & $(1, 3)$ & $>(5.2\texttt{e}6,1.3\texttt{e}9)$ & * & $>20$h \\
 \hline
    \end{tabular}
}}
\caption{{A comparison of optimality and computational complexity  between the proposed method and the centralized approach. The notation~${a}\texttt{e}{b}\triangleq a\times 10^{{b}}$ for~${a},{b}>0$. For the proposed method, $\mathcal{P}$ is the summation of all local product $\mathcal{P}_i$ between $\mathcal{R}_i$ and $\mathcal{A}_{\varphi_i}$, $C_{\texttt{suf}}$ is the maximum length of the plan suffix among all robots, and the synthesis time is mainly the time needed to solve the MILP problem for initial coordination. For the centralized case, $\mathcal{P}$ is the product of $\mathcal{T}_{\texttt{a}}$ and $\mathcal{A}_{\varphi_{\texttt{a}}}$ from Section~\ref{sec:centralized}, $C_{\texttt{suf}}$ is the minimum length of its plan suffix, and the synthesis time is mainly the time needed for the model-checking process.}}
\label{table:compare-opt}
\end{center}
\end{table}
%%%%%%%%%

\subsubsection{Static Approaches}\label{sec:trivial}

Alternatively, a straightforward solution to the data-gathering problem considered in this paper is to require that all relay robots remain static at their initial positions for all time. As a result, as long as each source robot is informed about the location of at least one relay robot, every source robot can simply navigate to the closest relay robot once it has gathered enough data that needs to be transferred and uploaded. This static approach is \emph{always} {feasible} for the problem considered here, but can be very inefficient if the workspace is large and many relay robots are located close to each other.
{The optimal placement of relay robots can only be determined in a centralized way as described in Section~\ref{sec:centralized}.} 
We implement the above approach and simulate the system for~$100s$ under the same settings presented in Section~\ref{sec:sys-description}. As a result, $58$ units of data are uploaded in total, as shown in Table~\ref{table:compare-data} and Figure~\ref{fig:total}, compared with $137$ units via the proposed dynamic approach. 
The difference is that in our approach every relay robot can actively navigate to meet multiple source robots that need to transfer data while minimizing the total waiting time.

Finally, another simple solution  is to force all source and relay robots  to move as a group that is within communication range for all time. In this case the source robots can follow a predefined static order to execute their local plans. Since all relay robots are within the communication range, the data gathered by any source robot can be transferred to any relay robot and uploaded directly. This static approach  imposes all-time connectivity of the communication network.   It can also be very inefficient since the source robots can not execute their local plans simultaneously and independently, while relay robots are not fully utilized regarding their data-uploading ability.
{This predefined static order can be also optimized in a centralized way, as described in Section~\ref{sec:centralized}, by adding the constraints that all robots are within each other's communication range.}
We implement the above approach and simulate the system for~$100s$ under the same settings. The source robots take turns to execute their local plans according to the order of their IDs.
As shown in Table~\ref{table:compare-data} and Figure~\ref{fig:total}, only $8$ units of data are uploaded in total, compared to $137$ units via our approach. The difference is that the proposed intermittent communication framework allows all source robots to move and execute their local plans independently. Thus the source and relay robots only meet when they need to transfer data and coordinate their next meeting event.

The above  studies show that the proposed dynamic approach has a much less computational burden compared to the centralized approach and improves greatly the overall data-gathering efficiency compared to the static approaches.

%%%%%%%%%
\begin{table}[t]
\begin{center}

{\renewcommand{\arraystretch}{1.4}
\scalebox{1.05}{
    \begin{tabular}{| c| c | c | c |c |}
    \hline
   \textbf{Approach} & type-1,2,3  &  type-4,5 & type-6,7 & \textbf{Total} \\ 
        \hhline{|=|=|=|=|=|}
 Proposed & 54 & 38 & 45 & 137 \\
\hline
 Static One & 21 & 18 & 19 & 58 \\
\hline
 Static Two & 2 & 4 & 2 & 8 \\
    \hline
    \end{tabular}
}
}
\caption{{Total amount of different types of data uploaded by the relay robots during the simulation of~$100s$, under the proposed approach and two static approaches discussed in Section~\ref{sec:compare}.}}

\label{table:compare-data}
\end{center}
\end{table}
%%%%%%%%%
%%%%%%%%%%%
\begin{figure}[t]
\centering
   \includegraphics[width =0.48\textwidth]{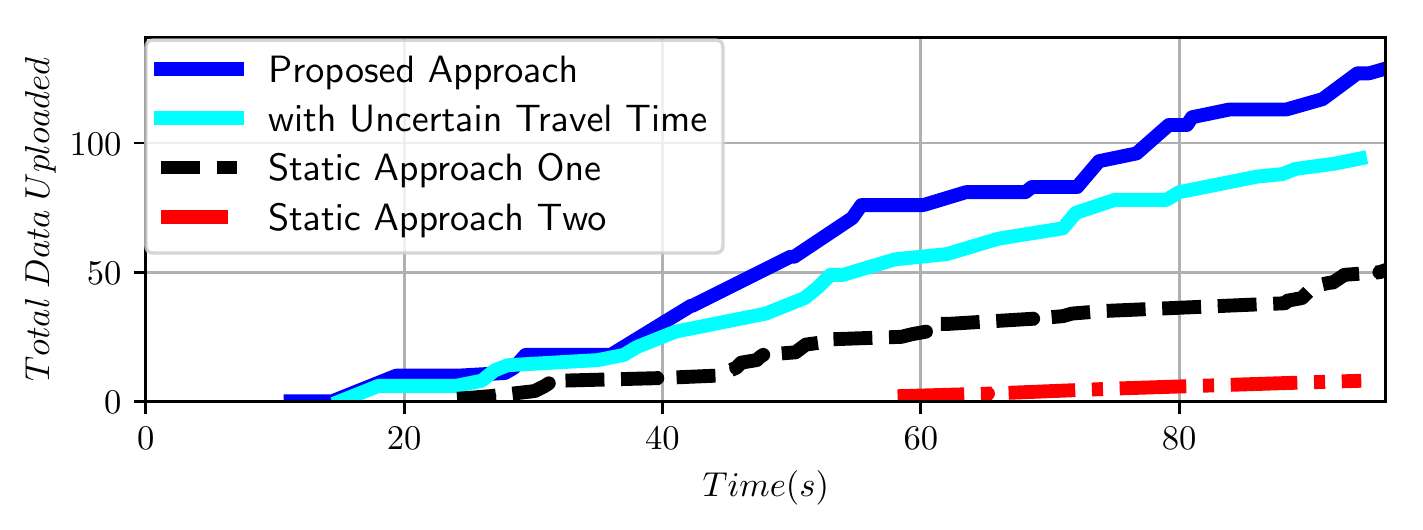}
  \caption{{The total amount of data uploaded under the proposed approach and two static approaches discussed in Section~\ref{sec:compare}. Simulation videos for all three cases are online~\cite{tro17-videos}.}}
\label{fig:total}
\end{figure}
%%%%%%%%%%%%%%%%%%%%

\section{Experimental Study}\label{sec:exp}
In this section, we present the experimental study to validate the proposed approach. Four differential-driven ``iRobots'' are deployed within a~$2.5m \times 2.0 m$ workspace, as shown in Figure~\ref{fig:dynamic-ws-exp}, whose positions and orientation are tracked via an Optitrack motion capture system. 
The communication among the robots is handled by the Robot operating system (ROS).

% ==============================   
\begin{figure}[t]
     \centering
     \includegraphics[width=0.48\textwidth]{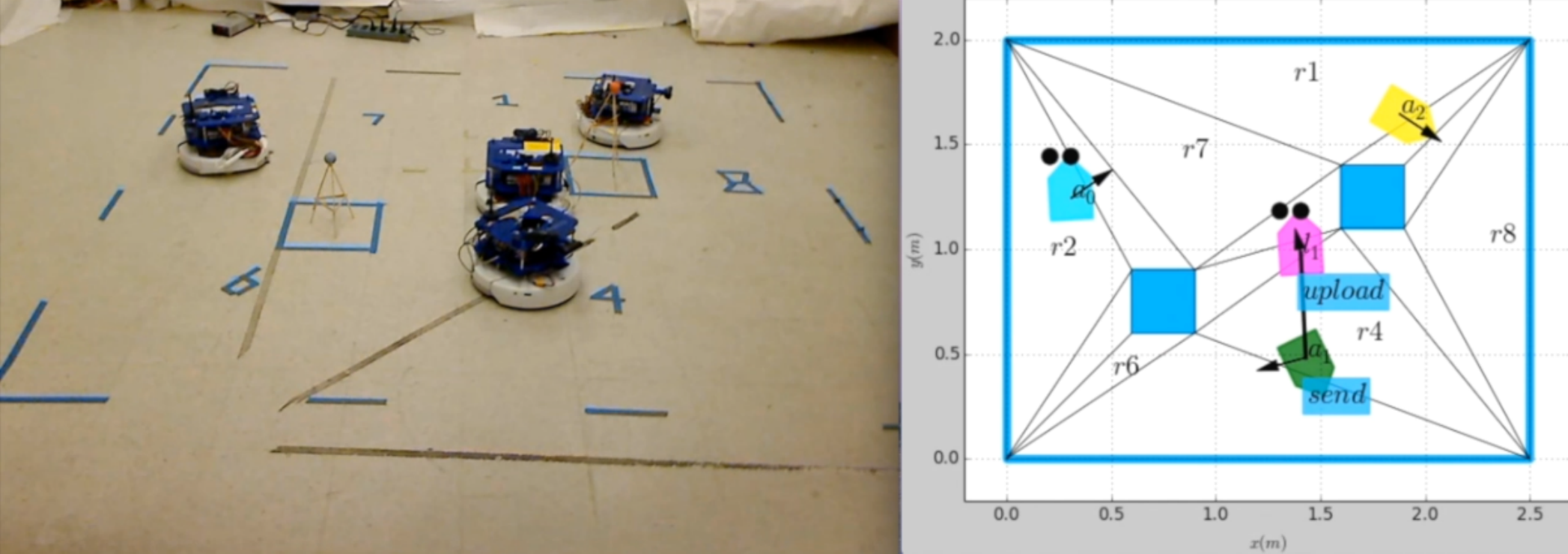}
     \caption{{A snapshot} of the experiment setup. Left: regions of interest are marked by their IDs on the ground. Tripods in boxed area are obstacles. The relay robot is marked by a yellow tape and the rest are source robots. 
Right: the real-time visualization panel to monitor the robot motion and communication. Robots~$a_0, a_1, a_2,l_1$ are in blue, green, yellow and magenta, respectively. The stored data units are indicated by filled black circles. The data-gathering, data-transfer and upload actions are shown by blue text boxes.}
     \label{fig:dynamic-ws-exp}
   \end{figure}
% ==============================
% ==============================   
\begin{figure}[t]
     \centering
     \includegraphics[width=0.4\textwidth]{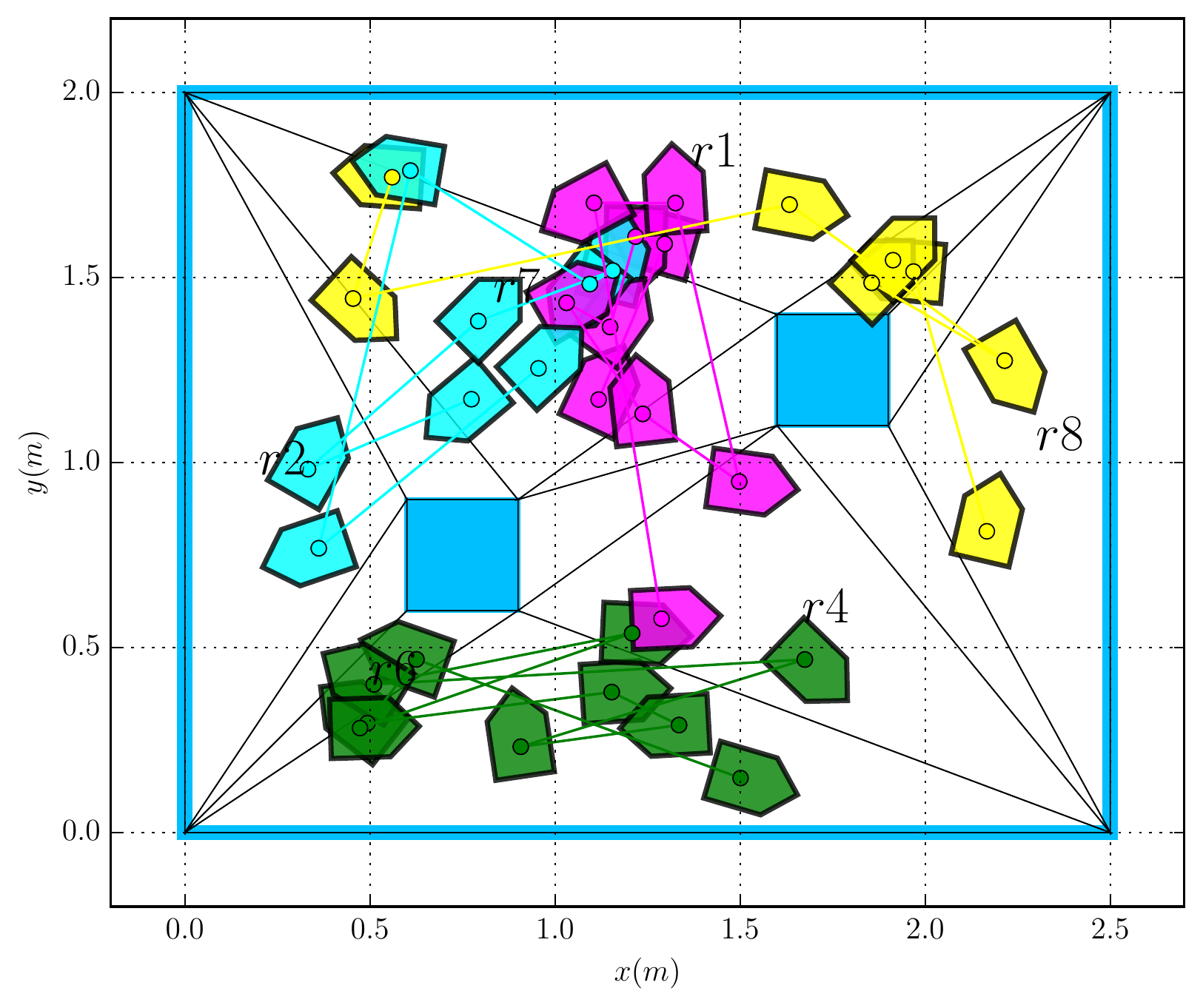}
     \caption{The trajectory of each robot during the experiment, sampled at every~$30s$. Robots~$a_0, a_1, a_2,l_1$'s trajectories are shown in blue, green, yellow and magenta, respectively}
     \label{fig:dynamic-traj-exp}
   \end{figure}
% ==============================

%The experiment software repository can be found online~\cite{git_mdp_tg_ros}.

\subsection{System Description}\label{sec:model-exp}

Three iRobots serve as source robots (denoted by~$a_0, a_1, a_2$) while one serves as the relay robot (denoted by~$l_1$). 
As shown in Figure~\ref{fig:dynamic-ws-exp}, there are six regions of interest and two obstacles within the workspace; and a visualization panel is used to monitor the robot data-gathering actions and communications in real time. 
For source robots, their regions of interest, allowed actions and local tasks are defined as follows:
Robot~$a_0$ has two regions of interest~$r_1,r_2$ and two actions~$g_1,g_2$ associated with one type-1 and two type-2 data units, respectively. Its task is to gather type-1 data in region~$r_1$ and then type-2 data in region~$r_2$ (in this order) infinitely often, i.e.,~$\varphi_0=\square \Diamond ((r_1 \wedge g_1) \wedge \Diamond (r_2 \wedge g_2))$;
Robot~$a_1$ has two regions of interest~$r_4,r_6$ and two actions~$g_3,g_4$ associated with two type-3 and one type-2 data units, respectively. Its task is to gather type-3 data in region~$r_4$ and then type-4 data in region~$r_6$ (any order) infinitely often, i.e.,~$\varphi_1=\square \Diamond (r_4 \wedge g_3) \wedge \square \Diamond (r_6 \wedge g_4)$;
Robot~$a_2$ has two regions of interest~$r_7,r_8$ and two actions~$g_5,g_6$ associated with two type-5 and one type-6 data units, respectively. Its task is to gather type-5 data in region~$r_7$ and then type-6 data in region~$r_8$ (any order) infinitely often, i.e.,~$\varphi_2=\square \Diamond (r_7 \wedge g_5) \wedge \square \Diamond (r_8 \wedge g_6)$.
All robots have a limited buffer size of~$4$ data units and a communication range of~$0.8m$.
The initial position of robots~$a_0,a_1,a_2,l_1$ is given by~$(1.1,0.8),(1.1,0.2),(2.0,0.7),(1.6,0.5)$ in meters, respectively. Thus the relay robot~$l_1$ is initially connected to all source robots~$a_0,a_1,a_2$, which satisfies Assumption~\ref{assum:initial}.

The size of an iRobot is around~$0.4m$ in diameter. Given the cluttered workspace, a local collision avoidance scheme is needed for successful point-to-point navigation as an important part of the plan execution. 
In this work, we rely on the method of reciprocal velocity obstacles (RVO) introduced in~\cite{van2008reciprocal}. However, since the original algorithm is developed mainly for nonholonomic robots, not for the unicycle robots considered here, we need to introduce a transition period during which the robots turn in place towards the desired direction determined by the RVO method, before moving forward. 

\subsection{Experiment Results}\label{sec:results-exp}

Following the procedure described in Section~\ref{sec:execution}, we first synthesize the offline plan for each source robot. For robot~$a_0$, it took~$0.01s$ for the solver~\cite{package} to obtain the initial plan; similarly for~$a_1,a_2$. 
For the initial coordination via~\eqref{eq:ilp}, it took~$0.16s$ for Gurobi~\cite{gurobi} to find the initial path for robot~$l_1$. 
Once the robots starts moving, the plan execution and coordination of meeting events during run time follows Section~\ref{sec:execution}. Note that swapping meeting events between relay robots is not considered as there is only one relay robot.
The experiment was performed for a duration of~$3$ minutes, and the full  video can be found online at~\cite{tro17-videos}. 
The sampled trajectory of each robot is plotted in Figure~\ref{fig:dynamic-traj-exp}. It can be seen that each robot satisfies its local task and avoids collisions with the static obstacles. 
Moreover, the amount of data stored within each robot's buffer is shown in Figure~\ref{fig:buffer-exp}, which verifies that buffer constraints are always respected. Finally, during the experiment, $27$ data units were uploaded in total to the data center, as shown in Figure~\ref{fig:total-exp}.
% ==============================   
\begin{figure}[t]
     \centering
     \includegraphics[width=0.48\textwidth]{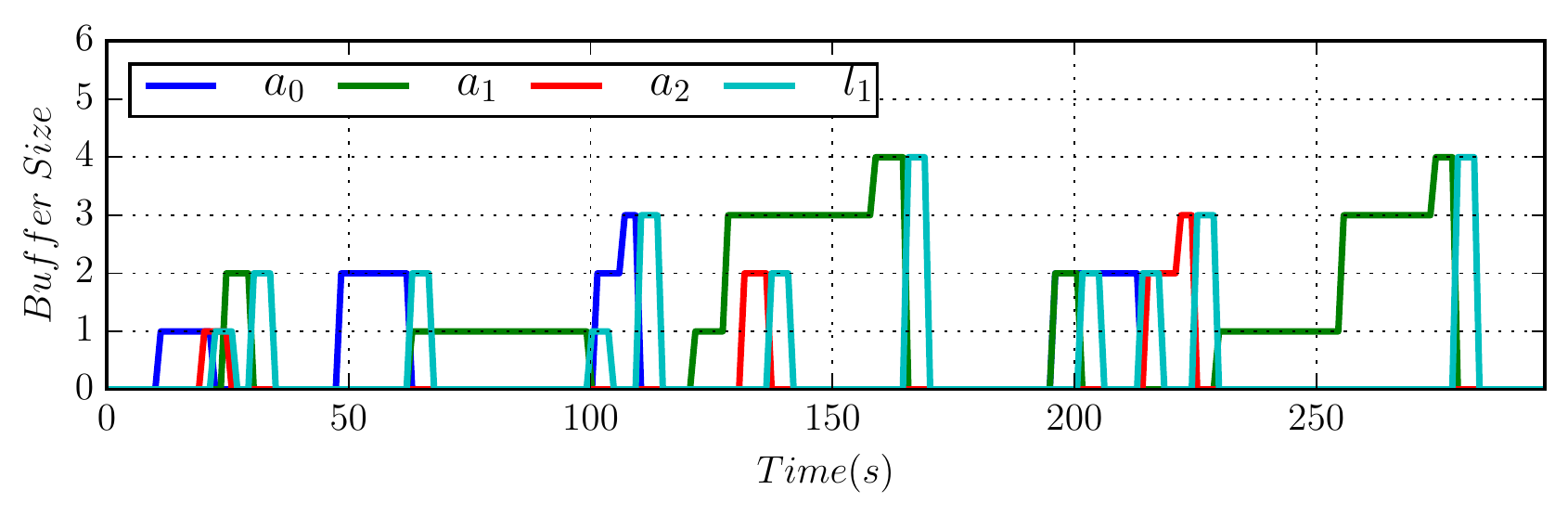}
     \caption{Evolution of the amount of data stored at each robot's buffer. Note that the buffer size limit is set to~$4$ for all robots. }
     \label{fig:buffer-exp}
   \end{figure}
% ==============================

% ==============================   
\begin{figure}[t]
     \centering     \includegraphics[width=0.48\textwidth]{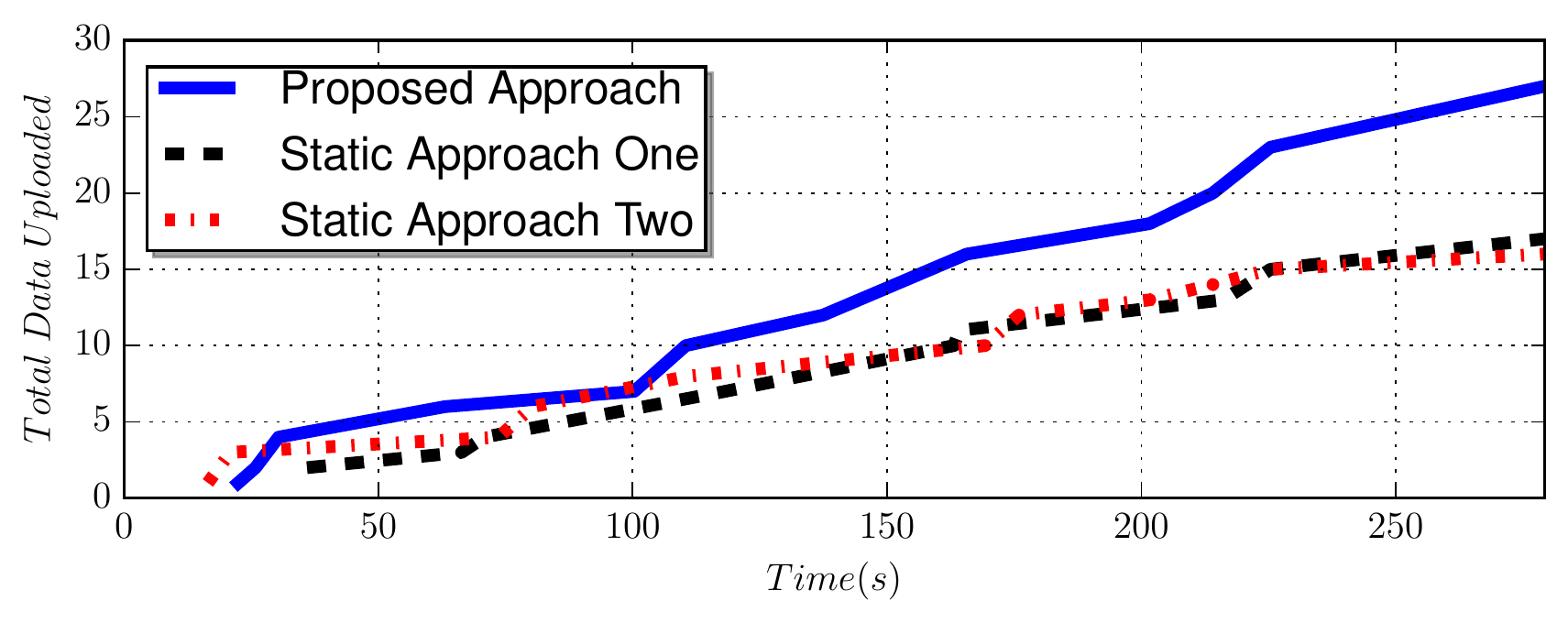}
  \caption{The total amount of data uploaded during the experimental study, under the proposed approach and two static approaches.}
     \label{fig:total-exp}
   \end{figure}
% ==============================

\subsection{Comparison to Static Approaches}\label{sec:compare-exp}

We also compare the performance of our method to the two static approaches introduced in Section~\ref{sec:compare}. The experiment videos for all three cases can be found in~\cite{tro17-videos}.

First, as shown in Figure~\ref{fig:static-ws-exp}, we conducted an experiment using the static approach one for a duration of 3 minutes. Robot~$l_1$ remains still at its initial location for all time, while robots~$a_0,a_1,a_2$ navigate back to robot~$l_1$ once they have gathered enough data that needs to be transferred. As shown in Figure~\ref{fig:total-exp}, $17$ units of data are uploaded in total. 
Second, as shown in Figure~\ref{fig:platoon-ws-exp}, we conducted an experiment using the static approach two, also for a duration of 3 minutes.
The robots form a platoon in the order~$a_2,a_0,l_1,a_1$, so that all source robots~$a_0,a_1,a_2$ are always within the communication range of robot~$l_1$.  Robots~$a_0,a_1,a_2$ take turns to execute their local plans by navigating with the whole group to their desired regions to gather data and transfer the data \emph{directly} to~$l_1$.
As shown in Figure~\ref{fig:total-exp}, $16$ units of data are uploaded in total, compared to $27$ units using the proposed dynamic approach. 

%%%%%%%%%%%%%%%%%%%%
\begin{figure}[t]
\centering
   \includegraphics[width =0.48\textwidth]{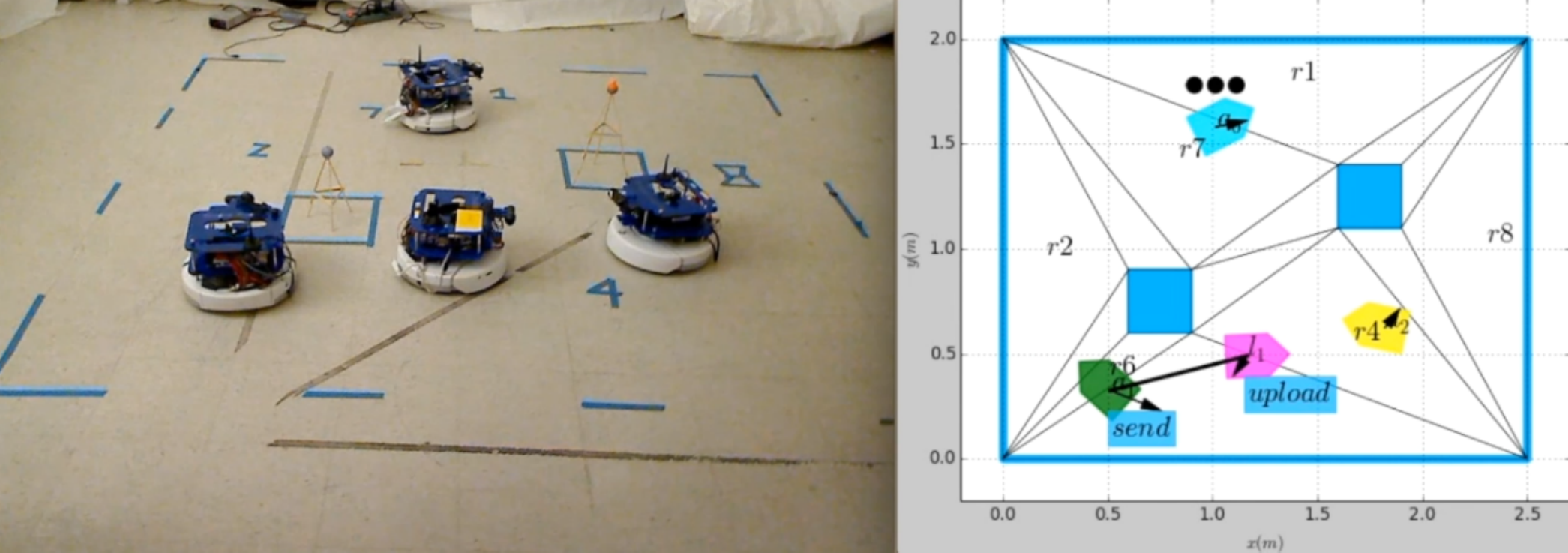}
  \caption{A snapshot for the experiment under the static approach one, where the relay robot~$l_1$ remains static at all time.}
\label{fig:static-ws-exp}
\end{figure}
%%%%%%%%%%%%%%%%%%%%
%%%%%%%%%%%%%%%%%%%%
\begin{figure}[t]
\centering
   \includegraphics[width =0.48\textwidth]{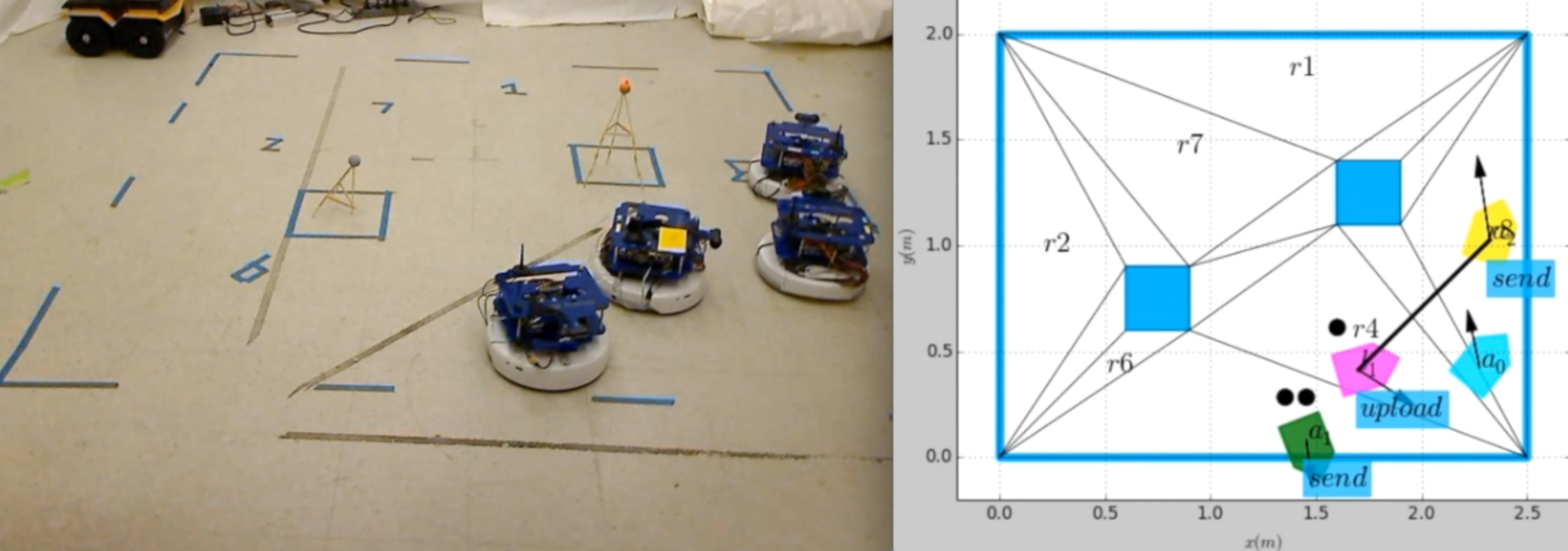}
  \caption{Snapshot of the experiment under the static approach two, where all robots move as a group, being connected at all time.}
\label{fig:platoon-ws-exp}
\end{figure}
%%%%%%%%%%%%%%%%%%%%

Thus similar conclusions can be obtained as in Section~\ref{sec:compare} that the proposed dynamic approach improves greatly the overall data-gathering efficiency compared to the other two static approaches. It is worth mentioning that sequence of \emph{spontaneous} meeting events that happened during the experiment is quite different from the simulated result, due to the inter-robot collision avoidance scheme.

% ==============================
\section{Conclusion and Future Work}\label{sec:future}
In this work we proposed a distributed online framework for multiple  robots that jointly coordinates local data-gathering tasks and intermittent communication events so that the collected data at the robots are transferred to a data center while ensuring that robot buffers do not overflow. 
Unlike most relevant literature that relies on all-time connectivity, the proposed intermittent communication framework allows the robots to operate in disconnect mode and accomplish their tasks free of communication constraints, significantly improving on the performance of data acquisition and delivery. We validated our method through numerical simulations and real experiments, and showed that all local data-gathering tasks are satisfied and the local buffers do not overflow.

%==============================
\bibliography{meng.bib}

% Generated by IEEEtran.bst, version: 1.13 (2008/09/30)
\begin{thebibliography}{10}
\providecommand{\url}[1]{#1}
\csname url@samestyle\endcsname
\providecommand{\newblock}{\relax}
\providecommand{\bibinfo}[2]{#2}
\providecommand{\BIBentrySTDinterwordspacing}{\spaceskip=0pt\relax}
\providecommand{\BIBentryALTinterwordstretchfactor}{4}
\providecommand{\BIBentryALTinterwordspacing}{\spaceskip=\fontdimen2\font plus
\BIBentryALTinterwordstretchfactor\fontdimen3\font minus
  \fontdimen4\font\relax}
\providecommand{\BIBforeignlanguage}[2]{{%
\expandafter\ifx\csname l@#1\endcsname\relax
\typeout{** WARNING: IEEEtran.bst: No hyphenation pattern has been}%
\typeout{** loaded for the language `#1'. Using the pattern for}%
\typeout{** the default language instead.}%
\else
\language=\csname l@#1\endcsname
\fi
#2}}
\providecommand{\BIBdecl}{\relax}
\BIBdecl

\bibitem{dunbabin2012robots}
M.~Dunbabin and L.~Marques, ``Robots for environmental monitoring: Significant
  advancements and applications,'' \emph{Robotics \& Automation Magazine,
  IEEE}, vol.~19, no.~1, pp. 24--39, 2012.

\bibitem{bhatia2010sampling}
A.~Bhatia, L.~E. Kavraki, and M.~Y. Vardi, ``Sampling-based motion planning
  with temporal goals,'' in \emph{Robotics and Automation (ICRA), 2010 IEEE
  International Conference on}, 2010, pp. 2689--2696.

\bibitem{ulusoy2013optimality}
A.~Ulusoy, S.~L. Smith, X.~C. Ding, C.~Belta, and D.~Rus, ``Optimality and
  robustness in multi-robot path planning with temporal logic constraints,''
  \emph{The International Journal of Robotics Research}, vol.~32, no.~8, pp.
  889--911, 2013.

\bibitem{baier2008principles}
C.~Baier and J.-P. Katoen, \emph{Principles of model checking}.\hskip 1em plus
  0.5em minus 0.4em\relax MIT press Cambridge, 2008.

\bibitem{fainekos2009temporal}
G.~E. Fainekos, A.~Girard, H.~Kress-Gazit, and G.~J. Pappas, ``Temporal logic
  motion planning for dynamic robots,'' \emph{Automatica}, vol.~45, no.~2, pp.
  343--352, 2009.

\bibitem{guo2017task}
M.~Guo and D.~V. Dimarogonas, ``Task and motion coordination for heterogeneous
  multiagent systems with loosely coupled local tasks,'' \emph{IEEE
  Transactions on Automation Science and Engineering}, vol.~14, no.~2, pp.
  797--808, 2017.

\bibitem{chen2012formal}
Y.~Chen, X.~C. Ding, A.~Stefanescu, and C.~Belta, ``Formal approach to the
  deployment of distributed robotic teams,'' \emph{Robotics, IEEE Transactions
  on}, vol.~28, no.~1, pp. 158--171, 2012.

\bibitem{fainekos2006translating}
G.~E. Fainekos, S.~G. Loizou, and G.~J. Pappas, ``Translating temporal logic to
  controller specifications,'' in \emph{Decision and Control, IEEE Conference
  on}, 2006, pp. 899--904.

\bibitem{kloetzer2011multi}
M.~Kloetzer, X.~C. Ding, and C.~Belta, ``Multi-robot deployment from ltl
  specifications with reduced communication,'' in \emph{Decision and Control
  and European Control Conference (CDC-ECC), IEEE Conference on}, 2011, pp.
  4867--4872.

\bibitem{guo2015multi}
M.~Guo and D.~V. Dimarogonas, ``Multi-agent plan reconfiguration under local
  ltl specifications,'' \emph{The International Journal of Robotics Research},
  vol.~34, no.~2, pp. 218--235, 2015.

\bibitem{tumova2014receding}
J.~Tumova and D.~V. Dimarogonas, ``A receding horizon approach to multi-agent
  planning from local ltl specifications,'' in \emph{American Control
  Conference (ACC)}, 2014, pp. 1775--1780.

\bibitem{arkin2002line}
R.~C. Arkin and J.~Diaz, ``Line-of-sight constrained exploration for reactive
  multiagent robotic teams,'' in \emph{Advanced Motion Control, 2002. 7th
  International Workshop on}.\hskip 1em plus 0.5em minus 0.4em\relax IEEE,
  2002, pp. 455--461.

\bibitem{esposito2006maintaining}
J.~M. Esposito and T.~W. Dunbar, ``Maintaining wireless connectivity
  constraints for swarms in the presence of obstacles,'' in \emph{Proceedings
  2006 IEEE International Conference on Robotics and Automation, 2006. ICRA
  2006.}\hskip 1em plus 0.5em minus 0.4em\relax IEEE, 2006, pp. 946--951.

\bibitem{ji2007distributed}
M.~Ji and M.~B. Egerstedt, ``Distributed coordination control of multi-agent
  systems while preserving connectedness.'' \emph{Georgia Institute of
  Technology}, 2007.

\bibitem{zavlanos2007flocking}
M.~M. Zavlanos, A.~Jadbabaie, and G.~J. Pappas, ``Flocking while preserving
  network connectivity,'' in \emph{Decision and Control (CDC), 46th IEEE
  Conference on}, 2007, pp. 2919--2924.

\bibitem{Derb11decentralized}
A.~Derbakova, N.~Correll, and D.~Rus, ``Decentralized self-repair to maintain
  connectivity and coverage in networked multi-robot systems,'' in
  \emph{Robotics and Automation (ICRA), IEEE International Conference on},
  2011, pp. 3863--3868.

\bibitem{schuresko2009distributed}
M.~Schuresko and J.~Cort{\'e}s, ``Distributed motion constraints for algebraic
  connectivity of robotic networks,'' \emph{Journal of Intelligent and Robotic
  Systems}, vol.~56, no. 1-2, pp. 99--126, 2009.

\bibitem{zavlanos2008distributed}
M.~M. Zavlanos and G.~J. Pappas, ``Distributed connectivity control of mobile
  networks,'' \emph{IEEE Transactions on Robotics}, vol.~24, no.~6, pp.
  1416--1428, 2008.

\bibitem{guo2014controlling}
M.~Guo, M.~M. Zavlanos, and D.~V. Dimarogonas, ``Controlling the relative agent
  motion in multi-agent formation stabilization,'' \emph{Automatic Control,
  IEEE Transactions on}, vol.~59, no.~3, pp. 820--826, 2014.

\bibitem{zavlanos2007potential}
M.~M. Zavlanos and G.~J. Pappas, ``Potential fields for maintaining
  connectivity of mobile networks,'' \emph{IEEE Transactions on robotics},
  vol.~23, no.~4, pp. 812--816, 2007.

\bibitem{guo2015communication}
M.~Guo, J.~Tumova, and D.~V. Dimarogonas, ``Communication-free multi-agent
  control under local tasks and relative-distance constraints,''
  \emph{Automatic Control, IEEE Transactions on.}, 2016, To appear.

\bibitem{guo2016hybrid}
M.~Guo, M.~Egerstedt, and D.~V. Dimarogonas, ``Hybrid control of multi-robot
  systems using embedded graph grammars,'' in \emph{Robotics and Automation
  (ICRA), IEEE International Conference on}, 2016.

\bibitem{zavlanos2011graph}
M.~M. Zavlanos, M.~B. Egerstedt, and G.~J. Pappas, ``Graph-theoretic
  connectivity control of mobile robot networks,'' \emph{Proceedings of the
  IEEE}, vol.~99, no.~9, pp. 1525--1540, 2011.

\bibitem{yan2012robotic}
Y.~Yan and Y.~Mostofi, ``Robotic router formation in realistic communication
  environments,'' \emph{IEEE Transactions on Robotics}, vol.~28, no.~4, pp.
  810--827, 2012.

\bibitem{le2012adaptive}
J.~Le~Ny, A.~Ribeiro, and G.~J. Pappas, ``Adaptive communication-constrained
  deployment of unmanned vehicle systems,'' \emph{IEEE Journal on Selected
  Areas in Communications}, vol.~30, no.~5, pp. 923--934, 2012.

\bibitem{zavlanos2013network}
M.~M. Zavlanos, A.~Ribeiro, and G.~J. Pappas, ``Network integrity in mobile
  robotic networks,'' \emph{IEEE Transactions on Automatic Control}, vol.~58,
  no.~1, pp. 3--18, 2013.

\bibitem{wen2014distributed}
G.~Wen, Z.~Duan, W.~Ren, and G.~Chen, ``Distributed consensus of multi-agent
  systems with general linear node dynamics and intermittent communications,''
  \emph{International Journal of Robust and Nonlinear Control}, vol.~24,
  no.~16, pp. 2438--2457, 2014.

\bibitem{wang2010awareness}
Y.~Wang and I.~I. Hussein, ``Awareness coverage control over large-scale
  domains with intermittent communications,'' \emph{IEEE Transactions on
  Automatic Control}, vol.~55, no.~8, pp. 1850--1859, 2010.

\bibitem{daly2007social}
E.~M. Daly and M.~Haahr, ``Social network analysis for routing in disconnected
  delay-tolerant manets,'' in \emph{Proceedings of the 8th ACM international
  symposium on Mobile ad hoc networking and computing}.\hskip 1em plus 0.5em
  minus 0.4em\relax ACM, 2007, pp. 32--40.

\bibitem{jones2007practical}
E.~P. Jones, L.~Li, J.~K. Schmidtke, and P.~A. Ward, ``Practical routing in
  delay-tolerant networks,'' \emph{IEEE Transactions on Mobile Computing},
  vol.~6, no.~8, pp. 943--959, 2007.

\bibitem{kantaros2016distributed}
Y.~Kantaros and M.~M. Zavlanos, ``Distributed communication-aware coverage
  control by mobile sensor networks,'' \emph{Automatica}, vol.~63, pp.
  209--220, 2016.

\bibitem{kantaros16distributed}
------, ``A distributed ltl-based approach for intermittent communication in
  mobile robot networks,'' in \emph{American Control Conference}, 2016. To
  appear.

\bibitem{kantaros2016simultaneous}
------, ``Simultaneous intermittent communication control and path optimization
  in networks of mobile robots,'' in \emph{Decision and Control (CDC), IEEE
  Conference on}.\hskip 1em plus 0.5em minus 0.4em\relax IEEE, 2016, pp.
  1794--1799.

\bibitem{smith2011optimal}
S.~L. Smith, J.~Tumova, C.~Belta, and D.~Rus, ``Optimal path planning for
  surveillance with temporal-logic constraints,'' \emph{The International
  Journal of Robotics Research}, vol.~30, no.~14, pp. 1695--1708, 2011.

\bibitem{yamauchi1998frontier}
B.~Yamauchi, ``Frontier-based exploration using multiple robots,'' in \emph{ACM
  Conference on Autonomous Agents}.\hskip 1em plus 0.5em minus 0.4em\relax ACM,
  1998, pp. 47--53.

\bibitem{leahy2016provably}
K.~Leahy, D.~Zhou, C.-I. Vasile, K.~Oikonomopoulos, M.~Schwager, and C.~Belta,
  ``Provably correct persistent surveillance for unmanned aerial vehicles
  subject to charging constraints,'' in \emph{Experimental Robotics}.\hskip 1em
  plus 0.5em minus 0.4em\relax Springer, 2016, pp. 605--619.

\bibitem{karaman2011linear}
S.~Karaman and E.~Frazzoli, ``Linear temporal logic vehicle routing with
  applications to multi-uav mission planning,'' \emph{International Journal of
  Robust and Nonlinear Control}, vol.~21, no.~12, pp. 1372--1395, 2011.

\bibitem{guo2017distributed}
M.~Guo and M.~M. Zavlanos, ``Distributed data gathering with buffer constraints
  and intermittent communication,'' in \emph{Robotics and Automation (ICRA),
  IEEE International Conference on}, 2017. To appear.

\bibitem{gastin2001fast}
P.~Gastin and D.~Oddoux, ``Fast ltl to b{\"u}chi automata translation,'' in
  \emph{Computer Aided Verification}.\hskip 1em plus 0.5em minus 0.4em\relax
  Springer, 2001, pp. 53--65.

\bibitem{package}
P\_MAS\_TG, \texttt{\url{https://github.com/MengGuo/P_MAS_TG}}.

\bibitem{lavalle2006planning}
S.~M. LaValle, \emph{Planning algorithms}.\hskip 1em plus 0.5em minus
  0.4em\relax Cambridge university press, 2006.

\bibitem{poly2tri}
Poly2tri, \texttt{\url{https://pypi.python.org/pypi/poly2tri}}.

\bibitem{Kloetzer15system}
M.~Kloetzer, C.~Mahulea, and R.~Gonzalez, ``Optimizing cell decomposition path
  planning for mobile robots using different metrics,'' in \emph{System Theory,
  Control and Computing (ICSTCC), International Conference on}, 2015, pp.
  565--570.

\bibitem{de2000computational}
M.~De~Berg, M.~Van~Kreveld, M.~Overmars, and O.~C. Schwarzkopf,
  \emph{Computational geometry}.\hskip 1em plus 0.5em minus 0.4em\relax
  Springer, 2000.

\bibitem{law2000traveling}
E.~L. Lawler, J.~K. Lenstra, and D.~B. Shmoys, \emph{The traveling salesman
  problem: a guided tour of combinatorial optimization}.\hskip 1em plus 0.5em
  minus 0.4em\relax Wiley, 1985.

\bibitem{gurobi}
Gurobi, \texttt{\url{https://www.gurobi.com/}}.

\bibitem{tro17-videos}
Videos, \texttt{\url{https://vimeo.com/233548160}}.

\bibitem{van2008reciprocal}
J.~Van~den Berg, M.~Lin, and D.~Manocha, ``Reciprocal velocity obstacles for
  real-time multi-agent navigation,'' in \emph{Robotics and Automation (ICRA),
  2008 IEEE International Conference on}, 2008, pp. 1928--1935.

\end{thebibliography}
\end{document}